\numberwithin{equation}{section}
\newtheorem{theorem}{Theorem}[section]
\newtheorem{lemma}{Lemma}[section]
\newtheorem{proposition}{\bf Proposition}[section]
\journal{arXiv}
\begin{document}
	\title{Short-term predictions and prevention strategies for COVID-19: A model-based study}
	
	\author[ISI]{Sk Shahid Nadim}
	\author[ISI]{Indrajit Ghosh \footnote{Corresponding author. Email: indra7math@gmail.com, indrajitg\_r@isical.ac.in}}
	\author[ISI]{Joydev Chattopadhyay}
	\address[ISI]{Agricultural and Ecological Research Unit, Indian Statistical Institute, Kolkata - 700 108, West Bengal, India}

\begin{abstract}
	An outbreak of respiratory disease caused by a novel coronavirus is ongoing from December 2019. As of July 22, 2020, it has caused an epidemic outbreak with more than 15 million confirmed infections and above 6 hundred thousand reported deaths worldwide. During this period of an epidemic when human-to-human transmission is established and reported cases of coronavirus disease 2019 (COVID-19) are rising worldwide, investigation of control strategies and forecasting are necessary for health care planning. In this study, we propose and analyze a compartmental epidemic model of COVID-19 to predict and control the outbreak. The basic reproduction number and control reproduction number are calculated analytically. A detailed stability analysis of the model is performed to observe the dynamics of the system. We calibrated the proposed model to fit daily data from the United Kingdom (UK) where the situation is still alarming. Our findings suggest that independent self-sustaining human-to-human spread ($R_0>1$, $R_c>1$) is already present. Short-term predictions show that the decreasing trend of new COVID-19 cases is well captured by the model. Further, we found that effective management of quarantined individuals is more effective than management of isolated individuals to reduce the disease burden. Thus, if limited resources are available, then investing on the quarantined individuals will be more fruitful in terms of reduction of cases. 
\end{abstract}

\begin{keyword}
	Coronavirus disease, Mathematical model, Basic reproduction number, Model calibration, Prediction, Control strategies, United Kingdom.
\end{keyword}

\maketitle

\section{Introduction}
In December 2019, an outbreak of novel coronavirus (2019-nCoV) infection, was first noted in Wuhan, Central China \cite{Who2019}. The outbreak was declared a public health emergency of international concern on 30 January 2020 by WHO. Coronaviruses belong to the Coronaviridae family and widely distributed in humans and other mammals \cite{huang2020clinical}. The virus is responsible for a range of symptoms including dry cough, fever, fatigue, breathing difficulty, and bilateral lung infiltration in severe cases, similar to those caused by SARS-CoV and MERS-CoV infections \cite{huang2020clinical, gralinski2020return}. Many people may experience non-breathing symptoms including nausea, vomiting and diarrhea \cite{cdcgov2020}. Some patients have reported radiographic changes in their ground-glass lungs; normal or lower than average white blood  cell lymphocyte, and platelet counts; hypoxaemia; and deranged liver and renal function. Most of them were said to be geographically connected to the Huanan seafood wholesale market, which was subsequently claimed by journalists to be selling freshly slaughtered game animals \cite{chinadaily2019}. The Chinese health authority said the patients initially tested negative for common respiratory viruses and bacteria but subsequently tested positive for a novel coronavirus (nCoV) \cite{chan2020familial}. In contrast to the initial findings  \cite{cheng20202019}, the 2019-nCoV virus spreads from person to person as confirmed in \cite{chan2020familial}. It has become an epidemic outbreak with more than 15 million confirmed infections and above 6 hundred thousand deaths worldwide as of 22 July 2020. The current epidemic outbreak result in 2,85,768 confirmed cases and 44,236 deaths in the UK \cite{Worldometer2020}. Since first discovery and identification of coronavirus in 1965, three major outbreaks occurred, caused by emerging, highly pathogenic coronaviruses, namely the 2003 outbreak of Severe Acute Respiratory Syndrome (SARS) in mainland China \cite{gumel2004modelling,li2003angiotensin}, the 2012 outbreak of Middle East Respiratory Syndrome (MERS) in Saudi Arabia \cite{de2013commentary,sardar2020realistic}, and the 2015 outbreak of MERS in South Korea \cite{cowling2015preliminary,kim2017middle}. These outbreaks resulted in SARS and MERS cases confirmed by more than 8000 and 2200, respectively \cite{kwok2019epidemic}. The COVID-19 is caused by a new  genetically similar corona virus to the viruses that cause SARS and MERS. Despite a relatively lower death rate compared to SARS and MERS, the COVID-19 spreads rapidly and infects more people than the SARS and MERS outbreaks. In spite of strict intervention measures implemented in the region where the infection originated, the infection spread locally in Wuhan, in China and around the globally. 

On 31 January 2020, the UK reported the first confirmed case of acute respiratory infection due to corona virus disease 2019 (COVID-19), and initially responded to the spread of infection by quarantining at-risk individuals. As of 28 June 2020, there were 3,12,654 confirmed cases and 43,730 confirmed cases deaths, the world's second highest per capita death rate among the major nations \cite{Worldometer2020}. Within the hospitals the infection rate is higher than in the population. In March 23, the UK government implemented a lock-down and declared that everyone should start social distancing immediately, suggesting that contact with others will be avoided as far as possible. Entire households should also quarantine themselves for 14 days if anyone has a symptom of COVID-19, and anyone at high risk of serious illness should isolate themselves for 12 weeks, including pregnant women, people over 70 and those with other health conditions. The country is literally at a standstill and the disease has seriously impacted the economy and the livelihood of the people.

As the 2019 coronavirus disease outbreak (COVID-19) is expanding rapidly in UK, real-time analyzes of epidemiological data are required to increase situational awareness and inform interventions. Earlier, in the first few weeks of an outbreak, real-time analysis shed light on the severity, transmissibility, and natural history of an emerging pathogen, such as SARS, the 2009 influenza pandemic, and Ebola \cite{chowell2009severe,chowell2011characterizing,fraser2009pandemic,lipsitch2003transmission}. Analysis of detailed patient line lists is especially useful for inferring key epidemiological parameters, such as infectious and incubation periods, and delays between infection and detection, isolation and case reporting \cite{chowell2009severe,chowell2011characterizing}. However, official patient's health data seldom become available to the public early in an outbreak, when the information is most required. In addition to medical and biological research, theoretical studies based on either mathematical or statistical modeling may also play an important role throughout this anti-epidemic fight in understanding the epidemic character traits of the outbreak, in predicting the inflection point and end time, and in having to decide on the measures to reduce the spread. To this end, many efforts have been made at the early stage to estimate key epidemic parameters and forecast future cases in which the statistical models are mostly used \cite{muniz2020epidemic,lai2020assessing,chakraborty2020real}. An Imperial College London study group calculated that 4000 (95\% CI: 1000-9700) cases had occurred in Wuhan with symptoms beginning on January 18, 2020, and an estimated basic reproduction number was 2.6 (95\% CI: 1.5-3.5) using the number of cases transported from Wuhan to other countries \cite{imai2020estimating}. Leung et al. reached a similar finding, calculating the number of cases transported from Wuhan to other major cities in China \cite{nowcast2019} and also suggesting the possibility for the spreading of risk \cite{bogoch2020pneumonia} for travel-related diseases. Mathematical modeling based on dynamic equations \cite{tang2020updated,sardar2020assessment,kucharski2020early,aldila2020mathematical,rajagopal2020fractional,britton2020mathematical} may provide detailed mechanism for the disease dynamics. Several studies were based on the UK COVID-19 situation \cite{davies2020effects,clark2020global,adameffectiveness,jit2020estimating}. Davies et. al \cite{davies2020effects} studied the potential impact of different control measures for mitigating the burden of COVID-19 in the UK. They used a stochastic age-structured transmission model to explore a range of intervention scenarios. These studies has broadly suggested that control measures could reduce the burden of COVID-19. However, there is a scope of comparing popular intervention strategies namely, quarantine and isolation utilizing recent epidemic data from the UK.  

In this study, we aim to study the control strategies that can significantly reduce the outbreak using a mathematical modeling framework. By mathematical analysis of the proposed model we would like to explore transmission dynamics of the virus among humans. Another goal is the short-term prediction of new COVID-19 cases in the UK.

\section{Model formulation}{\label{Model_formulation}}
General mathematical models for the spread of infectious diseases have been described previously \cite{may1991infectious,diekmann2000mathematical,hethcote2000mathematics}.
A compartmental differential equation model for COVID-19 is formulated and analyzed. We adopt a variant that reflects some key epidemiological properties of COVID-19. The model monitors the dynamics of seven sub-populations, namely susceptible $(S(t))$, exposed $(E(t))$, quarantined $(Q(t))$, asymptomatic $(A(t))$,  symptomatic $(I(t))$, isolated $(J(t))$ and recovered $(R(t))$ individuals. The total population size is $N(t)= S(t) + E(t) + Q(t) + A(t)+I(t)+J(t)+ R(t)$. In this model, quarantine refers to the separation of COVID-19 infected individuals from the general population when the population are infected but not infectious, whereas isolation describes the separation of COVID-19 infected individuals when the population become symptomatic infectious. Our model incorporates some demographic effects by assuming a proportional natural death rate $\mu>0$ in each of the seven sub-populations of the model. In addition, our model includes a net inflow of susceptible individuals into the region at a rate $\Pi$ per unit time. This parameter includes new births, immigration and emigration. The flow diagram of the proposed model is displayed in Figure \ref{Fig:flow_chart}.\\\\
\begin{figure}[ht]
    \centering
	\includegraphics[width=0.9\textwidth]{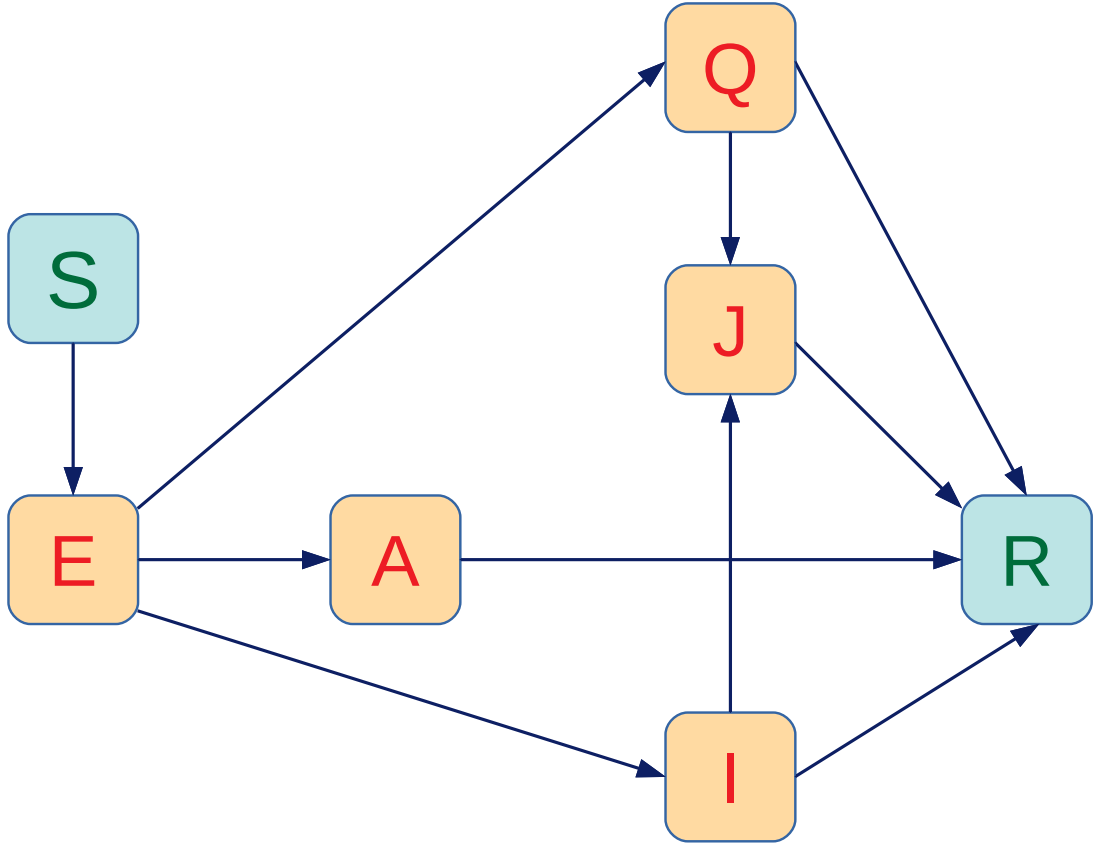}
	\caption{Compartmental flow diagram of the proposed model.}
	\label{Fig:flow_chart}
\end{figure}

\textbf{Susceptible population (S(t)):}

By recruiting individuals into the region, the susceptible population is increased and reduced by natural death. Also the susceptible population decreases after infection, acquired through interaction between a susceptible individual and an infected person who may be quarantined, asymptomatic, symptomatic, or isolated. For these four groups of infected individuals, the transmission coefficients are $\beta$, $r_Q \beta$, $r_A \beta$, and $r_J \beta$ respectively. We consider the $\beta$ as a transmission rate along with the modification factors for quarantined $r_Q$, asymptomatic $r_A$ and isolated $r_J$ individuals. The interaction between infected individuals (quarantined, asymptomatic, symptomatic or isolated) and susceptible is modelled in the form of total population without quarantined and isolated individuals using standard mixing incidence \cite{may1991infectious,diekmann2000mathematical,hethcote2000mathematics}. The rate of change of the susceptible population can be expressed by the following equation:
\begin{eqnarray}\label{EQ:eqn 2.01}
\displaystyle{\frac{dS}{dt}} &=& \Pi-\frac{S(\beta I+r_Q \beta Q+r_A \beta A+r_J \beta J)}{N}-\mu S,
\end{eqnarray}
\textbf{Exposed population(E(t)):}

Population who are exposed are infected individuals but not infectious for the community. The exposed population decreases with quarantine at a rate of $\gamma_1$, and become asymptomatic and symptomatic at a rate $k_1$ and natural death at a rate $\mu$. Hence,
\begin{eqnarray}
\displaystyle{\frac{dE}{dt}} &=& \frac{S(\beta I+r_Q \beta Q+r_A \beta A+r_J \beta J)}{N}-(\gamma_1+k_1+\mu)E
\end{eqnarray}
\textbf{Quarantine population (Q(t)):}

These are exposed individuals who are quarantined at a rate $\gamma_1$. For convenience, we consider that all quarantined individuals are exposed who will begin to develop symptoms and then transfer to the isolated class. Assuming that a certain portion of uninfected individuals are also quarantined would be more plausible, but this would drastically complicate the model and require the introduction of many parameters and compartments. In addition, the error caused by our simplification is to leave certain people in the susceptible population who are currently in quarantine and therefore make less contacts. The population is reduced by growth of clinical symptom at a rate of $k_2$ and transferred to the isolated class. $\sigma_1$ is the recovery rate of quarantine individuals and $\mu$ is the natural death rate of human population. Thus,
\begin{eqnarray}
\displaystyle{\frac{dQ}{dt}} &=& \gamma_1 E-(k_2+\sigma_1+\mu)Q
\end{eqnarray}
\textbf{Asymptomatic population(A(t)):}

Asymptomatic individuals were exposed to the virus but clinical signs of COVID have not yet developed. The exposed individuals become asymptomatic at a rate $k_1$ by a proportion $p$. The recovery rate of asymptomatic individuals is $\sigma_2$  and the natural death rate is $\mu$. Thus,
\begin{eqnarray}
\displaystyle{\frac{dA}{dt}} &=&  p k_1 E-(\sigma_2+\mu)A
\end{eqnarray}
\textbf{Symptomatic population(I(t)):}

The symptomatic individuals are produced by a proportion of $(1-p)$ of exposed class after the exposer of clinical symptoms of COVID by exposed individuals. $\gamma_2$ is the isolation rate of the symptomatic individuals, $\sigma_3$ is the recovery rate and natural death at a rate $\mu$. Thus,
\begin{eqnarray}
\displaystyle{\frac{dI}{dt}} &=& (1-p)k_1 E-(\gamma_2+\sigma_3+\mu)I
\end{eqnarray}
\textbf{Isolated population(J(t)):}

The isolated individuals are those who have been developed by clinical symptoms and been isolated at hospital. The isolated individuals are come from quarantined community at a rate $k_2$ and symptomatic group at a rate $\gamma_2$. The recovery rate of isolated individuals is $\sigma_4$, disease induced death rate is $\delta$ and natural death rate is $\mu$. Thus,
\begin{eqnarray}
\displaystyle{\frac{dJ}{dt}} &=& k_2 Q+\gamma_2 I-(\delta+\sigma_4+\mu)J
\end{eqnarray}
\textbf{Recovered population(R(t)):}

Quarantined, asymptomatic, symptomatic and isolated individuals recover from the disease at rates $\sigma_1$, $\sigma_2$, $\sigma_3$ and $\sigma_4$; respectively, and this population is reduced by a natural death rate $\mu$. Thus,
\begin{eqnarray}
\displaystyle{\frac{dR}{dt}} &=& \sigma_1 Q+\sigma_2 A+\sigma_3 I+\sigma_4 J-\mu R
\end{eqnarray}

 From the above considerations, the following system of ordinary differential equations governs the dynamics of the system:

\begin{eqnarray}\label{EQ:eqn 2.1}
\displaystyle{\frac{dS}{dt}} &=& \Pi-\frac{S(\beta I+r_Q \beta Q+r_A \beta A+r_J \beta J)}{N}-\mu S,\nonumber \\
\displaystyle{\frac{dE}{dt}} &=& \frac{S(\beta I+r_Q \beta Q+r_A \beta A+r_J \beta J)}{N}-(\gamma_1+k_1+\mu)E,\nonumber \\
\displaystyle{\frac{dQ}{dt}} &=& \gamma_1 E-(k_2+\sigma_1+\mu)Q,\nonumber \\
\displaystyle{\frac{dA}{dt}} &=&  p k_1 E-(\sigma_2+\mu)A, \\
\displaystyle{\frac{dI}{dt}} &=& (1-p)k_1 E-(\gamma_2+\sigma_3+\mu)I, \nonumber \\
\displaystyle{\frac{dJ}{dt}} &=& k_2 Q+\gamma_2 I-(\delta+\sigma_4+\mu)J, \nonumber\\
\displaystyle{\frac{dR}{dt}} &=& \sigma_1 Q+\sigma_2 A+\sigma_3 I+\sigma_4 J-\mu R, \nonumber
\end{eqnarray}
All the parameters and their biological interpretation are given in Table \ref{tab:mod1} respectively.\\

\begin{table}[ht]
\begin{center}
\caption{Description of parameters used in the model.}
\label{tab:mod1}
\begin{tabular}{c p{7.2cm} p{2.5cm} p{2cm}}
\hline
Parameters & Interpretation &  Value & Reference \\ \hline
$\Pi$ & Recruitment rate & 2274 &  \cite{Worldometer2020}\\
$\beta$ & Transmission rate & 0.7008 & Estimated\\
$r_Q$ & Modification factor for quarantined & 0.3  & Assumed \\
$r_A$ & Modification factor for asymptomatic & 0.45 & Assumed \\
$r_J$ & Modification factor for isolated & 0.6 &  Assumed\\
$\gamma_1$ &Rate at which the exposed individuals are diminished by quarantine & 0.0668 & Estimated\\
$\gamma_2$ &Rate at which the symptomatic individuals are diminished by isolation & 0.1059 & Estimated\\
$k_1$ &  Rate at which exposed become infected & 1/7 & \cite{Who2019} \\
$k_2$ &  Rate at which quarantined individuals are isolated & 0.0632 & Estimated\\
$p$ & Proportion of asymptomatic individuals & 0.13166 & \cite{tang2020estimation}\\
$\sigma_1$ & Recovery rate from quarantined individuals & 0.2158 & Estimated\\
$\sigma_2$ & Recovery rate from asymptomatic individuals & 0.03 & Estimated\\
$\sigma_3$ & Recovery rate from symptomatic individuals & 0.46 & \cite{Who2019}\\
$\sigma_4$ & Recovery rate from isolated individuals & 0.4521 & Estimated\\
$\delta$ & Diseases induced mortality rate & 0.0015 & \cite{Worldometer2020}\\
$\mu$ & Natural death rate  & 0.3349 $\times$ $10^{-4}$ & \cite{lifexp2018}\\
\hline
\end{tabular}
 	\end{center}
\end{table}

\section{Mathematical analysis}{\label{Mathematical_analysis}}
\subsection{\textbf{Positivity and boundedness of the solution}}
This subsection is provided to prove the positivity and boundedness of solutions of the system \eqref{EQ:eqn 2.1} with initial conditions $(S(0),E(0),Q(0),A(0),I(0),J(0),R(0))^T\in \mathbb{R}_{+}^7$. We first state the following lemma.
\begin{lemma} \label{lma1} 
     Suppose $\Omega \subset \mathbb{R} \times \mathbb{C}^n$ is open, $f_i \in C(\Omega, \mathbb{R}), i=1,2,3,...,n$. If $f_i|_{x_i(t)=0,X_t \in \mathbb{C}_{+0}^n}\geq 0$, $X_t=(x_{1t},x_{2t},.....,x_{1n})^T, i=1,2,3,....,n$, then $\mathbb{C}_{+0}^n\lbrace \phi=(\phi_1,.....,\phi_n):\phi \in \mathbb{C}([-\tau,0],\mathbb{R}_{+0}^n)\rbrace$ is the invariant domain of the following equations
	\begin{align*}
	\frac{dx_i(t)}{dt}=f_i(t,X_t), t\geq \sigma, i=1,2,3,...,n.
	\end{align*}
	where $\mathbb{R}_{+0}^n=\lbrace (x_1,....x_n): x_i\geq 0, i=1,....,n \rbrace$ \cite{yang1996permanence}.
\end{lemma}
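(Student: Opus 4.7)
The plan is to prove positive invariance of $\mathbb{C}_{+0}^n$ via a first-exit-time argument, augmented by a small perturbation trick to handle the borderline case in which the inward component of the vector field is only weakly nonnegative on the boundary.

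First I would argue by contradiction. Suppose there exists $\phi \in \mathbb{C}_{+0}^n$ for which the solution $X(t)$ of the functional differential system leaves $\mathbb{R}_{+0}^n$. Since $X$ is continuous and $n$ is finite, the escape time $t^{*} := \inf\{\, t > 0 : x_i(t) < 0 \text{ for some } i \,\}$ is well-defined and strictly positive (by continuity of $X$ and nonnegativity of $\phi$). By minimality of $t^{*}$, every component $x_j$ is nonnegative on $[-\tau, t^{*}]$, so $X_{t^{*}} \in \mathbb{C}_{+0}^n$, and there is at least one index $i^{*}$ with $x_{i^{*}}(t^{*}) = 0$. The standing hypothesis applied at $(t^{*}, X_{t^{*}})$ then yields $\dot{x}_{i^{*}}(t^{*}) = f_{i^{*}}(t^{*}, X_{t^{*}}) \geq 0$.

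The delicate step is closing the contradiction: the bound $\dot{x}_{i^{*}}(t^{*}) \geq 0$ does not by itself rule out higher-order decay of $x_{i^{*}}$ into the negative region immediately after $t^{*}$. To sidestep this, I would study the perturbed family
\begin{equation*}
\dot{y}_{i}^{\,\varepsilon}(t) = f_i(t, Y_t^{\,\varepsilon}) + \varepsilon, \qquad Y_0^{\,\varepsilon} = \phi, \qquad i = 1, \dots, n,
\end{equation*}
for $\varepsilon > 0$. Repeating the first-exit-time reasoning for $Y^{\,\varepsilon}$ gives $\dot{y}_{i^{*}}^{\,\varepsilon}(t^{*}) \geq \varepsilon > 0$, which is strictly incompatible with $y_{i^{*}}^{\,\varepsilon}$ crossing zero downward at $t^{*}$. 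Hence $Y^{\,\varepsilon}(t) \in \mathbb{R}_{+0}^n$ on the maximal interval of existence, and letting $\varepsilon \downarrow 0$ together with continuous dependence on parameters for functional differential equations with continuous right-hand sides yields $X(t) \in \mathbb{R}_{+0}^n$ throughout.

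The main obstacle is precisely this tangency behaviour on the boundary; the $+\varepsilon$ perturbation converts a weak tangency into strict inward-pointedness, making the contradiction immediate. The remaining routine checks, namely existence of $Y^{\,\varepsilon}$ on a uniform time interval and continuity of the solution map at $\varepsilon = 0$ under the stated continuity of each $f_i$, are standard for FDEs of the type considered and do not require uniqueness or any Lipschitz assumption beyond what is implicit in the setting.
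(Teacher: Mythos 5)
The paper offers no proof of this lemma at all: it is quoted verbatim from Yang et al.\ \cite{yang1996permanence} and used as a black box to establish positivity of system \eqref{EQ:eqn 2.1}, so there is no internal argument to compare yours against and your attempt must stand on its own. Your skeleton (first exit time plus an $\varepsilon$-perturbation to break tangency) is the standard route to such quasi-positivity statements, and the perturbed step is handled correctly: at a first exit time $s^{*}$ of $Y^{\varepsilon}$ the segment $Y^{\varepsilon}_{s^{*}}$ still lies in $\mathbb{C}^{n}_{+0}$ with the exiting component equal to zero, so the hypothesis forces $\dot y^{\varepsilon}_{i^{*}}(s^{*})\geq\varepsilon>0$, which is incompatible with a downward crossing. (A small slip: $t^{*}>0$ need not hold when some component of $\phi(0)$ already vanishes, but nothing in the argument uses strict positivity of $t^{*}$.)

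The genuine gap is the final limit $\varepsilon\downarrow 0$. You assert continuous dependence of the solution on the perturbation ``without uniqueness or any Lipschitz assumption,'' but continuity of the solution map at $\varepsilon=0$ is precisely a uniqueness statement: for merely continuous $f_i$, an Arzel\`a--Ascoli argument gives only that a subsequence of $Y^{\varepsilon}$ converges to \emph{some} solution of the unperturbed equation, which is then nonnegative --- not that it converges to the particular solution $X$ you assumed escapes the cone. This cannot be repaired under the stated hypotheses: the scalar example $\dot x=-\sqrt{|x|}$ satisfies $f(0)=0\geq 0$, yet $x(t)=-t^{2}/4$ is a solution through $x(0)=0$ that leaves $\mathbb{R}_{+0}$ immediately. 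Hence, with $f_i\in C(\Omega,\mathbb{R})$ only, the lemma holds merely in the weak sense that a nonnegative solution through each $\phi$ exists, and that is all your argument actually delivers. To conclude invariance for \emph{every} solution you must add a local (or one-sided) Lipschitz hypothesis so that the solution is unique and the $\varepsilon$-limit identifies it; this is harmless for the application, since the right-hand side of \eqref{EQ:eqn 2.1} is locally Lipschitz wherever $N>0$, but it needs to be stated rather than waved away.
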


\begin{proposition}
	The system \eqref{EQ:eqn 2.1} is invariant in $\mathbb{R}_{+}^7$.
\end{proposition}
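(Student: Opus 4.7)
The plan is to apply Lemma \ref{lma1} directly: I will verify the tangent (sub-tangent) condition on every coordinate face of $\mathbb{R}_{+}^{7}$, i.e.\ I will show that whenever exactly one compartment vanishes and all the others are non-negative, the corresponding time derivative is non-negative. This is the cleanest way to conclude that $\mathbb{R}_{+}^{7}$ is an invariant region for the flow of \eqref{EQ:eqn 2.1}, without having to integrate anything explicitly.

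Concretely, I would list the seven boundary conditions and check them one at a time. For $S=0$, the nonlinear incidence vanishes (since $S$ multiplies the whole bracket), so $\dot S = \Pi \geq 0$. For $E=0$, the loss term $-(\gamma_1+k_1+\mu)E$ drops out and $\dot E = S(\beta I + r_Q\beta Q + r_A\beta A + r_J\beta J)/N$, which is a sum of products of non-negative quantities. For $Q=0$, $A=0$, $I=0$ one gets $\gamma_1 E$, $pk_1 E$, $(1-p)k_1 E$ respectively, all $\geq 0$ provided $E\geq 0$. For $J=0$ one gets $k_2 Q + \gamma_2 I \geq 0$, and for $R=0$ one gets $\sigma_1 Q + \sigma_2 A + \sigma_3 I + \sigma_4 J \geq 0$. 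Each of the seven inequalities matches the hypothesis $f_i|_{x_i=0,X_t\in\mathbb{C}_{+0}^{n}}\geq 0$ of Lemma \ref{lma1}, so that lemma yields invariance of $\mathbb{R}_{+}^{7}$.

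The only subtle point (and hence the only real obstacle) is the standard-incidence denominator $N$ in the $S$ and $E$ equations: the right-hand side is a priori undefined at the origin $N=0$. I would handle this by noting that for any initial datum in $\mathbb{R}_{+}^{7}\setminus\{0\}$ one has $N(0)>0$, and adding the seven equations gives $\dot N = \Pi - \mu N - \delta J \leq \Pi - \mu N$, while dropping the $-\delta J$ term in the opposite direction and using $N(0)>0$ together with the argument above for each coordinate gives $N(t)>0$ on the maximal interval of existence, so the vector field is smooth on the relevant region and Lemma \ref{lma1} applies. Once this is in place the proposition follows immediately from the seven sign checks above.
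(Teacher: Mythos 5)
Your proposal is correct and follows essentially the same route as the paper: checking the sign of each $f_i$ on the corresponding coordinate face and invoking Lemma \ref{lma1}. The one thing you add that the paper omits is the explicit justification that $N(t)>0$ so the standard-incidence term is well defined; this is a genuine (if minor) gap in the paper's own argument that your version closes.
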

\begin{proof}
	By re-writing the system \eqref{EQ:eqn 2.1} we have
	\begin{eqnarray}
	\frac{dX}{dt} & =M(X(t)), X(0)=X_0\geq 0
	\label{EQ:eqn 2.2}
	\end{eqnarray}
	$ M(X(t))=(M_1(X),M_1(X),...,M_7(X))^T$\\
	We note that
	\begin{align*}
	\frac{dS}{dt}|_{S=0}&=\Pi \geq 0,\\ 
	\frac{dE}{dt}|_{E=0}&=\frac{S(\beta I+r_Q \beta Q+r_A \beta A+r_J \beta J)}{S+Q+A+I+J+R}\geq 0,\\ \frac{dQ}{dt}|_{Q=0}&=\gamma_1 E \geq 0,\\ 
	\frac{dA}{dt}|_{A=0}&=p k_1 E\geq 0,\\ 
	\frac{dI}{dt}|_{I=0}&=(1-p)k_1 E\geq 0,\\ \frac{dJ}{dt}|_{J=0}&=k_2 Q+\gamma_2 I\geq 0,\\
	\frac{dR}{dt}|_{R=0}&=\sigma_1 Q+\sigma_2 A+\sigma_3 I +\sigma_4 J\geq 0.
	\end{align*}
	Then it follows from the Lemma \ref{lma1} that $\mathbb{R}_{+}^7$ is an invariant set.
\end{proof}
\begin{proposition}
	The system \eqref{EQ:eqn 2.1} is bounded in the region\\ $\Omega=\lbrace(S,E,Q,A,I,J,R)\in \mathbb{R}_+^{7}|S+E+Q+A+I+J+R\leq \frac{\Pi}{\mu}\rbrace$
\end{proposition}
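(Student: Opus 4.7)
The plan is to show that the total population $N(t)=S+E+Q+A+I+J+R$ satisfies a scalar linear differential inequality, so that a standard comparison/Gr\"onwall argument forces $N(t)$ to stay bounded above by $\Pi/\mu$ (at least asymptotically, and for all $t$ once $N(0)\le\Pi/\mu$). Since all seven state variables are non-negative by the previous proposition, a bound on $N$ immediately gives a bound on each compartment, which is exactly the content of $\Omega$.

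First I would add the seven equations of system \eqref{EQ:eqn 2.1}. The incidence terms in $dS/dt$ and $dE/dt$ cancel, the quarantine, progression and isolation flows $\gamma_1 E$, $k_2 Q$, $pk_1 E$, $(1-p)k_1 E$, $\gamma_2 I$ appear with opposite signs in the source and sink compartments and therefore cancel, and the recovery terms $\sigma_1 Q$, $\sigma_2 A$, $\sigma_3 I$, $\sigma_4 J$ likewise cancel against their contributions to $dR/dt$. What remains is
\begin{equation*}
\frac{dN}{dt} \;=\; \Pi \;-\; \mu N \;-\; \delta J.
\end{equation*}

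Since $J(t)\ge 0$ for all $t\ge 0$ by the invariance of $\mathbb{R}_+^7$ established in the preceding proposition, this yields the differential inequality $dN/dt \le \Pi - \mu N$. A direct integration (or the standard comparison lemma) then gives
\begin{equation*}
N(t) \;\le\; \frac{\Pi}{\mu} \;+\; \left(N(0)-\frac{\Pi}{\mu}\right)e^{-\mu t},
\end{equation*}
so $\limsup_{t\to\infty} N(t)\le \Pi/\mu$, and whenever $N(0)\le \Pi/\mu$ we have $N(t)\le \Pi/\mu$ for every $t\ge 0$. Combining this with non-negativity of each component shows that the simplex $\Omega$ is positively invariant and attracts all trajectories, which is the required boundedness statement.

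I do not anticipate a substantive obstacle here: the only thing to be careful about is the sign of $-\delta J$ (which cooperates with us, yielding an upper-bound inequality rather than an equality) and the appeal to non-negativity of $J$, which requires that the positivity proposition be applied first. The rest is a routine linear comparison argument.
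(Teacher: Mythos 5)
Your proof is correct and follows exactly the same route as the paper: sum the compartment equations to obtain $\frac{dN}{dt}=\Pi-\mu N-\delta J\leq \Pi-\mu N$, then conclude $\limsup_{t\to\infty}N(t)\leq \Pi/\mu$ by a standard comparison argument. The paper states this in two lines; your version merely spells out the cancellations, the explicit integrated bound, and the appeal to non-negativity of $J$, all of which are implicit in the paper's argument.
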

\begin{proof}
	We observed from the system that
	\begin{align*}
	&\frac{dN}{dt}=\Pi-\mu N-\delta J\leq \Pi-\mu N\\
	& \Longrightarrow \lim\limits_{t\rightarrow \infty}sup N(t)\leq \frac{\Pi}{\mu}
	\end{align*}
	Hence the system \eqref{EQ:eqn 2.1} is bounded.
\end{proof}
\subsection{\textbf{Diseases-free equilibrium and control reproduction number}}
The diseases-free equilibrium can be obtained for the system \eqref{EQ:eqn 2.1} by putting $E=0, Q=0, A=0, I=0, J=0$, which is denoted by $P_1^{0}=(S^0,0,0,0,0,0,R^0),$ where\\
\begin{align*}
S^0=\frac{\Pi}{\mu}, R^0=0.
\end{align*}
The control reproduction number, a central concept in the study of the spread of communicable diseases, is e the number of secondary infections caused by a single infective in a population consisting essentially only of susceptibles with the control measures in place (quarantined and isolated class) \cite{van2008further}. This dimensionless number is calculated at the DFE by next generation operator method \cite{van2002reproduction, diekmann2000mathematical} and it is denoted by $R_c$.

 For this, we assemble the compartments which are infected from the system \eqref{EQ:eqn 2.1} and decomposing the right hand side as $\mathcal{F}-\mathcal{V}$, where $\mathcal{F}$ is the transmission part, expressing the the production of new infection, and the transition part is $\mathcal{V}$, which describe the change in state.

\begin{align*}
\mathcal{F}&=\begin{pmatrix}
\frac{S(\beta I+r_Q \beta Q+r_A \beta A+r_J \beta J)}{N}\\
0\\
0\\
0\\
0
\end{pmatrix},
\mathcal{V}=\begin{pmatrix}
(\gamma_1+k_1+\mu)E\\
-\gamma_1 E+(k_2+\sigma_1+\mu)Q\\
-p k_1 E+(\sigma_2+\mu)A\\
-(1-p)k_1 E+(\gamma_2+\sigma_3+\mu)I\\
 -k_2 Q-\gamma_2 I+(\delta+\sigma_4+\mu)J
\end{pmatrix}
\end{align*}

Now we calculate the jacobian of $\mathcal{F}$ and $\mathcal{V}$ at DFE $P_1^{0}$

\begin{align*}
F=\frac{\partial \mathcal{F}}{\partial X}=\begin{pmatrix}
0 & r_Q\beta & r_A \beta & \beta & r_J \beta \\
0&0 & 0 &0 &0  \\
0 &0 & 0 &0  &0\\
0 &0&0 &0 &0 \\
0 &0 &0 &0 &0\\
\end{pmatrix},
\end{align*}
\begin{align*}
V=\frac{\partial \mathcal{V}}{\partial X}=\begin{pmatrix}
\gamma_1+k_1+\mu &0 & 0 &0&0 \\
-\gamma_1 &k_2+\sigma_1+\mu & 0 &0 &0 \\
-p k_1 &0 & \sigma_2+\mu &0&0  \\
-(1-p)k_1 &0 &0 &\gamma_2+\sigma_3+\mu &0 \\
0 &-k_2 &0  &-\gamma_2 &\delta+\sigma_4+\mu\\
\end{pmatrix}.
\end{align*}
Following \cite{heffernan2005perspectives}, $R_c=\rho(FV^{-1})$, where $\rho$ is the spectral radius of the next-generation matrix ($FV^{-1}$). Thus, from the model \eqref{EQ:eqn 2.1}, we have the
following expression for $R_c$:
\begin{align}
R_c &=\frac{r_Q \beta \gamma_1}{(\gamma_1+k_1+\mu)(k_2+\sigma_1+\mu)}+\frac{r_A \beta p k_1}{(\gamma_1+k_1+\mu)(\sigma_2+\mu)}\\\nonumber
&+\frac{\beta k_1(1-p)}{(\gamma_1+k_1+\mu)(\gamma_2+\sigma_3+\mu)} + \frac{r_J \beta \gamma_1 k_2}{(\gamma_1+k_1+\mu)(k_2+\sigma_1+\mu)(\delta+\sigma_4+\mu)}\\\nonumber
&+\frac{r_J \beta (1-p)k_1 \gamma_2}{(\gamma_1+k_1+\mu)(\gamma_2+\sigma_3+\mu)(\delta+\sigma_4+\mu)}\\\nonumber
\end{align}
\subsection{\textbf{Stability of DFE}}
\begin{theorem}
	The diseases free equilibrium(DFE)  $P_1^{0}=(S^0,0,0,0,0,0,R^0)$ of the system \eqref{EQ:eqn 2.1} is locally asymptotically stable if $R_c<1$ and unstable if $R_c>1$.	
\end{theorem}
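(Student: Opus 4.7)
The plan is to linearize system (2.1) at $P_1^0$ and show that the spectrum of the resulting Jacobian $\mathbf{J}(P_1^0)$ lies in the open left half-plane precisely when $R_c<1$. I would carry this out by combining a block-triangular decomposition with the already-constructed next-generation matrices $F$ and $V$.

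First, I would compute $\mathbf{J}(P_1^0)$ in the ordering $(S,E,Q,A,I,J,R)$. The key simplification is that the force of infection $\beta I+r_Q\beta Q+r_A\beta A+r_J\beta J$ and every infected compartment vanish at $P_1^0$, so differentiating the $S$- and $R$-equations in $S$ or $R$ kills all nonlinear contributions, while differentiating any other equation in $S$ or $R$ gives zero. Hence the columns of $\mathbf{J}(P_1^0)$ corresponding to $S$ and $R$ each contain a single nonzero entry, namely $-\mu$ on the diagonal. This produces a block-triangular form from which two eigenvalues equal to $-\mu$ are immediate, and the remaining five eigenvalues coincide with those of the $5\times 5$ submatrix indexed by the infected compartments $(E,Q,A,I,J)$.

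Next, I would identify that $5\times 5$ submatrix with the matrix $F-V$ already written down in the derivation of $R_c$. This is a direct consequence of the $\mathcal{F}$–$\mathcal{V}$ splitting: $\mathcal{F}$ contains precisely the new-infection term, $\mathcal{V}$ contains all transition and removal terms, and their Jacobians at $P_1^0$ add back to the Jacobian of the infected subsystem. With that identification, the stability question reduces to controlling the spectral abscissa $s(F-V)$.

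Finally, I would invoke Theorem~2 of van den Driessche and Watmough (2002). The hypotheses (A1)–(A5) are easy to check from the structure of (2.1): $\mathcal{F}$ is componentwise nonnegative, its only nonzero entry sits in the exposed compartment, the $i$th component of $\mathcal{F}$ vanishes whenever the $i$th infected compartment is empty, $-V$ is a Metzler matrix with positive column sums after accounting for demographic losses, and the disease-free subsystem is governed by $\dot S=\Pi-\mu S$, which has $S^0=\Pi/\mu$ as a globally attracting equilibrium. The theorem then asserts $s(F-V)<0\iff \rho(FV^{-1})<1$ and $s(F-V)>0\iff \rho(FV^{-1})>1$. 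Since $R_c=\rho(FV^{-1})$, this together with the two eigenvalues $-\mu$ yields local asymptotic stability of $P_1^0$ when $R_c<1$ and instability when $R_c>1$. The main obstacle is purely notational rather than mathematical: checking assumptions (A1)–(A5) unambiguously for this particular $\mathcal{F}$–$\mathcal{V}$ splitting, after which the result is immediate.
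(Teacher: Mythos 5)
Your argument is correct, but it is not the route the paper takes. The paper works directly with the full $7\times 7$ Jacobian at $P_1^0$, expands the characteristic polynomial, and rewrites the characteristic equation in the form $G_1(\lambda)=1$, where $G_1$ is the sum of five positive rational functions whose value at $\lambda=0$ is exactly $R_c$. It then argues that for $\mathrm{Re}(\lambda)\geq 0$ one has $|G_1(\lambda)|\leq G_1(0)=R_c<1$, so no eigenvalue with nonnegative real part can exist when $R_c<1$; and when $R_c>1$, since $G_1(0)>1$ and $G_1(\lambda)\to 0$ as $\lambda\to\infty$ along the positive real axis, the intermediate value theorem furnishes a positive real eigenvalue, hence instability. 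You instead exploit the block-triangular structure to peel off the two $-\mu$ eigenvalues, identify the infected block with $F-V$, and invoke Theorem~2 of van den Driessche and Watmough to convert $\rho(FV^{-1})<1$ into $s(F-V)<0$. Your route is shorter and less error-prone (the paper's hand-expanded characteristic equation is exactly the kind of computation where sign slips occur, and indeed its displayed expansion is not fully consistent with the rewritten $G_1$), and it delivers both directions of the equivalence in one stroke; its cost is that the burden shifts to verifying hypotheses (A1)--(A5), which you assert but do not fully spell out --- in particular the claim that $V$ is a nonsingular M-matrix (your phrase ``positive column sums after accounting for demographic losses'' is the right idea but should be stated as: $-V$ has nonnegative off-diagonal entries and $V$ has a nonnegative inverse, which follows from the triangular-plus-diagonal structure here). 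The paper's approach, by contrast, is self-contained and elementary, requiring nothing beyond the modulus estimate on $G_1$. Both proofs are sound; they simply trade explicit computation against reliance on the general next-generation framework, which the paper already cites for the construction of $R_c$ but chooses not to use for the stability claim itself.
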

\begin{proof}
	We calculate the Jacobian of the system \eqref{EQ:eqn 2.1} at DFE, and is given by
\begin{align*}
J_{P_1^{0}}={\begin{pmatrix}
	-\mu &0 &-r_Q \beta &-r_A \beta & -\beta &-r_J \beta &0  \\
	0&-(\gamma_1+k_1+\mu) & r_Q\beta &r_A\beta &\beta &r_J\beta&0 \\
	0 & \gamma_1 & -(k_2+\sigma_1+\mu)&0 & 0 &0 &0  \\
	0& p k_1 &0 & -(\sigma_2+\mu) &0 &0 &0 \\
	0 &(1-p)k_1 &0 &0 &-(\gamma_2+\sigma_3+\mu)&0 &0 \\
	0 &0 &k_2 & 0 &\gamma_2 &-(\delta+\sigma_4+\mu) &0 \\
	0 &0 &\sigma_1 &\sigma_2 &\sigma_3 &\sigma_4 &-\mu \\
	\end{pmatrix}},
\end{align*}

	Let $\lambda$ be the eigenvalue of the matrix $J_{P_1^{0}}$. Then the characteristic equation is given by $det(J_{P_1^{0}}-\lambda I)=0$.\\
	$\Rightarrow$ $r_J \beta \gamma_1 k_2(\lambda+\sigma_2+\mu)(\lambda+\gamma_2+\sigma_3+\mu)+r_J \beta \gamma_2 k_1(\lambda+k_2+\sigma_1+\mu)[(1-p)(\lambda+\sigma_2+\mu)]+r_A \beta p k_1(\lambda+\gamma_2+\sigma_3+\mu)(\lambda+\delta+\sigma_4+\mu)(\lambda+k_2+\sigma_1+\mu)+\beta k_1[(1-p)(\lambda+\sigma_2+\mu)](\lambda+\delta+\sigma_4+\mu)(\lambda+k_2+\sigma_1+\mu)-(\lambda+\gamma_1+k_1+\mu)(\lambda+\sigma_2+\mu)(\lambda+\gamma_2+\sigma_3+\mu)(\lambda+\delta+\sigma_4+\mu)(\lambda+k_2+\sigma_1+\mu)=0$.\\
	Which can be written as\\
	
	\small{
	\begin{multline*}
	 \frac{r_Q \beta \gamma_1}{(\lambda+\gamma_1+k_1+\mu)(\lambda+k_2+\sigma_1+\mu)}+\frac{r_A \beta p k_1}{(\lambda+\gamma_1+k_1+\mu)(\lambda+\sigma_2+\mu)}+
	 \frac{\beta k_1(1-p)}{(\lambda+\gamma_1+k_1+\mu)(\lambda+\gamma_2+\sigma_3+\mu)}\\
	 + \frac{r_J \beta[\gamma_1 k_2(\lambda+\sigma_2+\mu)(\lambda+\gamma_2+\sigma_3+\mu)+(1-p)k_1 \gamma_2(\lambda+k_2+\sigma_1+\mu)(\lambda+\sigma_2+\mu)]}{(\lambda+\gamma_1+k_1+\mu)(\lambda+k_2+\sigma_1+\mu)(\lambda+\sigma_2+\mu)(\lambda+\gamma_2+\sigma_3+\mu)(\lambda+\delta+\sigma_4+\mu)}=1.
     \end{multline*}}

	Denote 
	\begin{align*}
	G_1(\lambda) &=\frac{r_Q \beta \gamma_1}{(\lambda+\gamma_1+k_1+\mu)(\lambda+k_2+\sigma_1+\mu)}+\frac{r_A \beta p k_1}{(\lambda+\gamma_1+k_1+\mu)(\lambda+\sigma_2+\mu)}\\\nonumber
	&+\frac{\beta k_1 (1-p)}{(\lambda+\gamma_1+k_1+\mu)(\lambda+\gamma_2+\sigma_3+\mu)}\\\nonumber
	&+ \frac{r_J \beta \gamma_1 k_2}{(\lambda+\gamma_1+k_1+\mu)(\lambda+k_2+\sigma_1+\mu)(\lambda+\delta+\sigma_4+\mu)}\\\nonumber
	&+ \frac{r_J \beta (1-p)k_1 \gamma_2}{(\lambda+\gamma_1+k_1+\mu)(\lambda+\gamma_2+\sigma_3+\mu)(\lambda+\delta+\sigma_4+\mu)}.
	\end{align*}
	We rewrite $G_1(\lambda)$ as $G_1(\lambda)=G_{11}(\lambda)+G_{12}(\lambda)+G_{13}(\lambda)+G_{14}(\lambda)+G_{15}(\lambda)$\\
	Now if $Re(\lambda)\geq 0$, $\lambda=x+iy$, then
	\begin{align*}
	|G_{11}(\lambda)|&\leq \frac{r_Q \beta \gamma_1}{|\lambda+\gamma_1+k_1+\mu||\lambda+k_2+\sigma_1+\mu|}\leq G_{11}(x)\leq G_{11}(0)\\
	|G_{12}(\lambda)|&\leq \frac{r_A \beta p k_1}{|\lambda+\gamma_1+k_1+\mu||\lambda+\sigma_2+\mu|}\leq G_{12}(x)\leq G_{12}(0)\\
	|G_{13}(\lambda)|&\leq \frac{\beta k_1 (1-p)}{|\lambda+\gamma_1+k_1+\mu||\lambda+\gamma_2+\sigma_3+\mu|}\leq G_{13}(x)\leq G_{13}(0)\\
	|G_{14}(\lambda)|&\leq \frac{r_J \beta \gamma_1 k_2}{|\lambda+\gamma_1+k_1+\mu||\lambda+k_2+\sigma_1+\mu||\lambda+\delta+\sigma_4+\mu|}\leq G_{14}(x)\leq G_{14}(0)\\
	|G_{15}(\lambda)|&\leq  \frac{r_J \beta (1-p)k_1 \gamma_2}{|\lambda+\gamma_1+k_1+\mu||\lambda+\gamma_2+\sigma_3+\mu||\lambda+\delta+\sigma_4+\mu|}\leq G_{15}(x)\leq G_{15}(0)
	\end{align*}
	
	Then $G_{11}(0)+G_{12}(0)+G_{13}(0)+G_{14}(0)+G_{15}(0)=G_1(0)=R_c<1$, which implies $|G_1(\lambda)|\leq 1$.\\
	Thus for $R_c<1$, all the eigenvalues of the characteristics equation $G_1(\lambda)=1$ has negative real parts.
	
	Therefore if $R_c<1$, all eigenvalues are negative and hence DFE $P_1^{0}$ is locally asymptotically stable.
	
	Now if we consider $R_c>1$ i.e $G_1(0)>1$, then 
	\begin{align*}
	\lim\limits_{\lambda\rightarrow \infty} G_1(\lambda)=0.
	\end{align*}
	Then there exist $\lambda_1^{*}>0$ such that $G_1(\lambda_1^{*})=1$.
	
	That means there exist positive eigenvalue $\lambda_1^{*}>0$ of the Jacobian matrix.
	
	Hence DFE $P_1^{0}$ is unstable whenever $R_c>1$.
\end{proof}

\begin{theorem}
	The diseases free equilibrium (DFE) $P_1^{0}=(S^0,0,0,0,0,0,R^0)$  is globally asymptotically stable (GAS) for the system \eqref{EQ:eqn 2.1} if $R_c<1$ and unstable if $R_c>1$.
\end{theorem}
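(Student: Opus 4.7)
The plan is to use the standard comparison/next-generation framework for global stability of the DFE, splitting the state into uninfected variables $(S,R)$ and infected variables $Z=(E,Q,A,I,J)^{T}$. The instability part for $R_c>1$ follows immediately from the previous theorem (local instability implies the DFE cannot be GAS), so only the GAS direction when $R_c<1$ needs work. I would first observe that on the invariant region $\Omega$ we always have $S/N\leq 1$, so the nonlinear incidence in the $\dot{E}$ equation satisfies
\begin{equation*}
\frac{S(\beta I+r_{Q}\beta Q+r_{A}\beta A+r_{J}\beta J)}{N}\leq \beta I+r_{Q}\beta Q+r_{A}\beta A+r_{J}\beta J.
\end{equation*}
Combining this bound with the linear equations for $Q,A,I,J$, the infected subsystem satisfies the componentwise differential inequality $\dot{Z}\leq (F-V)Z$, where $F$ and $V$ are exactly the matrices computed in the previous subsection.

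Next I would invoke the fact (van den Driessche--Watmough) that since $V$ is a nonsingular M-matrix and $F\geq 0$, the condition $R_{c}=\rho(FV^{-1})<1$ is equivalent to the spectral abscissa $s(F-V)<0$. Because $F-V$ is a Metzler matrix, the linear comparison system $\dot{Y}=(F-V)Y$ with $Y(0)=Z(0)\geq 0$ preserves nonnegativity and satisfies $Y(t)\to 0$ exponentially. Kamke's comparison theorem for cooperative (Metzler) systems then gives $0\leq Z(t)\leq Y(t)$ for all $t\geq 0$, so $E(t),Q(t),A(t),I(t),J(t)\to 0$ as $t\to\infty$.

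It remains to upgrade this to convergence of the whole state to $P_{1}^{0}$. The $R$-equation is linear in $R$ with forcing $\sigma_{1}Q+\sigma_{2}A+\sigma_{3}I+\sigma_{4}J\to 0$, so by variation of constants $R(t)\to 0$; similarly, writing the $S$-equation as $\dot{S}=\Pi-\mu S-\lambda(t)S$ with force of infection $\lambda(t)\to 0$, the theory of asymptotically autonomous scalar equations (or a direct $\liminf/\limsup$ argument using boundedness in $\Omega$) gives $S(t)\to \Pi/\mu=S^{0}$. Combined with the local asymptotic stability established in the preceding theorem, every trajectory in $\Omega$ is attracted to the DFE and the DFE is stable, which is precisely global asymptotic stability. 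The main technical obstacle I anticipate is the last step: handing the nonlinear term in $\dot{S}$ cleanly, since one needs that $\lambda(t)S(t)\to 0$ uniformly enough to pass to the limiting equation $\dot{S}=\Pi-\mu S$; this is why establishing boundedness in $\Omega$ first (already done in the previous proposition) is essential, as it turns the vanishing of $\lambda(t)$ into vanishing of the whole perturbation term.
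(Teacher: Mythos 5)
Your argument is correct, and at bottom it is the paper's argument with the black box opened. The paper uses the identical decomposition $X=(S,R)$, $V=(E,Q,A,I,J)$ and verifies the two Castillo--Chavez conditions: its matrix $B=D_VG(X^{*},0)$ is exactly your $F-V$, and its condition $\widehat{G}(X,V)=BV-G(X,V)\geq 0$ is exactly your observation that $S/N\leq 1$ dominates the incidence by its linearization, i.e.\ the componentwise inequality $\dot{Z}\leq (F-V)Z$. Where you differ is that you prove the consequence rather than cite it: you pass from $R_c=\rho(FV^{-1})<1$ to $s(F-V)<0$ via the van den Driessche--Watmough equivalence, run the Kamke comparison against the cooperative linear majorant to get $Z(t)\to 0$, and then finish $S$ and $R$ by variation of constants and an asymptotically autonomous limit. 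This buys a self-contained proof and forces you to confront the step the packaged theorem hides, namely that the perturbation in the $S$-equation vanishes; as you note, this is immediate from $\lambda(t)S(t)\leq \beta I+r_Q\beta Q+r_A\beta A+r_J\beta J\to 0$ together with boundedness in $\Omega$, so the anticipated obstacle is not a real one. Your explicit treatment of the uninfected subsystem is in fact slightly cleaner than the paper's verification of condition (1), where $F(X,0)$ is written with second component $0$ rather than $-\mu R$ (a harmless slip, since $(\Pi/\mu,0)$ remains globally asymptotically stable either way). Both routes yield the same theorem; neither gives anything the other does not.
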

\begin{proof}
	We rewrite the system \eqref{EQ:eqn 2.1} as
	\begin{align*}
	\frac{dX}{dt}&=F(X,V)\\
	\frac{dV}{dt}&=G(X,V), G(X,0)=0
	\end{align*}	
	where $X=(S, R)\in R_2$ (the number of uninfected individuals compartments),
	$V=(E, Q, A, I, J)\in R_5 $ (the number of infected individuals compartments), and $P_1^{0}=(\frac{\Pi}{\mu},0,0,0,0,0,0)$ is the DFE of the system \eqref{EQ:eqn 2.1}. The global stability of the DFE is guaranteed if the following two conditions are satisfied:
	
	\begin{enumerate}
		\item For $\frac{dX}{dt}=F(X,0)$, $X^*$ is globally asymptotically stable,
		\item $G(X,V) = BV-\widehat{G}(X,V),$ $\widehat{G}(X,V)\geq 0$ for $(X,V)\in \Omega$,
	\end{enumerate}
	where $B=D_VG(X^*,0)$ is a Metzler matrix and $\Omega$ is the positively invariant set with respect to the model \eqref{EQ:eqn 2.1}. Following Castillo-Chavez et al \cite{castillo2002computation}, we check for aforementioned conditions.\\
	For system \eqref{EQ:eqn 2.1},
	\begin{equation*}
	F(X,0)=\begin{pmatrix}
	\Pi -\mu S\\
	0
	\end{pmatrix},
	\end{equation*}
	
	\begin{equation*}
	B=\begin{pmatrix}
	-(\gamma_1+k_1+\mu) & r_Q \beta & r_A \beta & \beta & r_J \beta \\
	\gamma_1 & -(k_2+\sigma_1+\mu) & 0 & 0 &0\\
	p k_1 & 0 & -(\sigma_2+\mu) & 0& 0\\
	(1-p)k_1 & 0 & 0 & -(\gamma_2+\sigma_3+\mu)& 0\\
	0 & k_2 & 0 & \gamma_2 & -(\delta+\sigma_4+\mu)
	\end{pmatrix}
	\end{equation*}
	and 
	\begin{align*}
	\widehat{G}(X,V)=\begin{pmatrix}
	r_Q \beta Q (1-\frac{S}{N})+r_A \beta A (1-\frac{S}{N})+ \beta I(1-\frac{S}{N})+r_J \beta J (1-\frac{S}{N})\\
	0\\
	0\\
	0\\
	0
	\end{pmatrix}.
	\end{align*}
	
	Clearly, $\widehat{G}(X,V)\geq 0$ whenever the state variables are inside $\Omega$. Also it is clear that $X^*=(\frac{\Pi}{\mu},0)$ is a globally asymptotically stable equilibrium of the system $\frac{dX}{dt}=F(X,0)$. Hence, the theorem follows.
\end{proof}
\subsection{\textbf{Existence and local stability of endemic equilibrium}}

In this section, the existence of the endemic equilibrium of the model \eqref{EQ:eqn 2.1} is established. Let us denote
\begin{align*}
    m_1 & =\gamma_1+k_1+\mu, m_2=k_2+\sigma_1+\mu, m_3=\sigma_2+\mu,\\
    m_4&=\gamma_2+\sigma_3+\mu, m_5=\delta+\sigma_4+\mu.
\end{align*}
Let $P^*=(S^*, E^*, Q^*, A^*, I^*, J^*, R^*)$ represents any arbitrary endemic equilibrium point (EEP) of the model \eqref{EQ:eqn 2.1}. Further, define
\begin{align}\label{EQ:eqn 3.30}
    \eta^*=\frac{\beta(I^*+r_Q Q^*+r_A A^*+r_J J^*)}{N^*}
\end{align}
It follows, by solving the equations in \eqref{EQ:eqn 2.1} at steady-state, that
\begin{align}\label{EQ:eqn 3.40}
    S^*&=\frac{\Pi}{\eta^*+\mu}, E^*=\frac{\eta^*S^*}{m_1}, Q^*=\frac{\gamma_1 \eta^* S^*}{m_1 m_2}, A^*=\frac{p k_1 \eta^* S^*}{m_1 m_3},\\\nonumber
    I^*& =\frac{(1-p)k_1 \eta^* S^*}{m_1 m_4}, J^*=\frac{\eta^* S^*(k_2 \gamma_1 m_4+(1-p)k_1 \gamma_2 m_2)}{m_1 m_2 m_4 m_5}\\
    R^* &=\frac{\eta^* S^*[\sigma_1 \gamma_1 m_3 m_4 m_5+p k_1 \sigma_2 m_2 m_4 m_5+(1-p)k_1 \sigma_3 m_2 m_3 m_5 +m_3 \sigma_4(k_2 \gamma_1 m_4+(1-p)k_1 \gamma_2 m_2)]}{\mu m_1 m_2 m_3 m_4 m_5}\nonumber
\end{align}
Substituting the expression in \eqref{EQ:eqn 3.40} into \eqref{EQ:eqn 3.30} shows that the non-zero equilibrium of the model \eqref{EQ:eqn 2.1} satisfy the following linear equation, in terms of $\eta^*$:
\begin{align}
    A \eta^* +B=0
\end{align}
where 
\begin{align*}
    A&= \mu[m_2 m_3 m_4 m_5+\gamma_1 m_3 m_4 m_5+p k_1 m_2 m_4 m_5+(1-p)k_1 m_2 m_3 m_5+k_2 \gamma_1 m_3 m_4\\ & +(1-p)k_1 \gamma_2 m_2 m_3]+\sigma_1 \gamma_1 m_3 m_4 m_5
     +\sigma_2 p k_1 m_2 m_4 m_5+(1-p)k_1 \sigma_3 m_2 m_3 m_5\\ &+\sigma_4 k_2 \gamma_1 m_3 m_4+(1-p)\sigma_4 \gamma_2 k_1 m_2 m_3\\
    B&=\mu m_1 m_2 m_3 m_4 m_5(1-R_c)
\end{align*}
Since $A>0$, $\mu>0$, $m_1>0$, $m_2>0$, $m_3>0$, $m_4>0$ and $m_5>0$, it is clear that the model \eqref{EQ:eqn 2.1} has a unique endemic equilibrium point (EEP) whenever $R_c>1$ and no positive endemic equilibrium point whenever $R_c<1$. This rules out the possibility of the existence of equilibrium other than DFE whenever $R_c<1$. Furthermore, it can be shown that, the DFE $P_1^{0}$ of the model \eqref{EQ:eqn 2.1} is globally asymptotically stable (GAS) whenever $R_c<1$.

From the above discussion we have concluded that
\begin{theorem}
     The model \eqref{EQ:eqn 2.1} has a unique endemic (positive) equilibrium, given by $P^*$, whenever $R_c>1$ and has no endemic equilibrium for $R_c\leq 1$.
\end{theorem}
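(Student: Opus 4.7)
The plan is to exploit the structure that the author has already laid out: reduce the search for a positive equilibrium to a single scalar equation in the equilibrium force of infection $\eta^*$, and then read off existence and uniqueness from the signs of its coefficients.

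First, I would start from the steady-state version of system \eqref{EQ:eqn 2.1} obtained by setting every time derivative to zero. Introducing the shorthand $m_1,\dots,m_5$ for the linear removal rates in the $E,Q,A,I,J$ equations, the last five equations are linear in $(E^*,Q^*,A^*,I^*,J^*)$ once $S^*$ and the force of infection $\eta^*$ defined in \eqref{EQ:eqn 3.30} are treated as parameters. Solving them in cascade (starting from $E^*=\eta^*S^*/m_1$, then $Q^*$, $A^*$, $I^*$ in parallel from $E^*$, then $J^*$ from $Q^*$ and $I^*$, and finally $R^*$) gives each infected compartment as $\eta^*S^*$ times an explicit positive constant in the parameters, which is exactly the content of \eqref{EQ:eqn 3.40}. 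The $S^*$-equation then yields $S^*=\Pi/(\eta^*+\mu)$.

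Next, I would substitute these closed-form expressions back into the definition \eqref{EQ:eqn 3.30} of $\eta^*$. Because every infected compartment is linear in $\eta^*S^*$ and $N^*$ is itself an affine combination of $S^*,E^*,\dots,R^*$, the $S^*$ (equivalently $\eta^*+\mu$) factor cancels from the ratio; what remains after clearing denominators is a linear equation of the form $A\eta^* + B = 0$ in $\eta^*$ alone. The coefficient $A$ collects only positive contributions coming from $\mu$ times $N^*$ plus recovery terms, so $A>0$ unconditionally. Grouping the constant terms and comparing with the expression for $R_c$ derived in Section 3.2, one recognizes $B = \mu\,m_1 m_2 m_3 m_4 m_5 (1-R_c)$.

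With the linear equation in hand, the conclusion is immediate: $\eta^* = -B/A = \mu m_1 m_2 m_3 m_4 m_5 (R_c-1)/A$, which is positive if and only if $R_c>1$, zero if $R_c=1$, and negative (hence not an admissible force of infection) if $R_c<1$. Since $\eta^*>0$ determines $S^*>0$ and all remaining components via \eqref{EQ:eqn 3.40} as strictly positive quantities, this gives a unique endemic equilibrium $P^*$ when $R_c>1$ and rules out any positive equilibrium distinct from the DFE when $R_c\le 1$. The step I expect to be the main obstacle is the bookkeeping in identifying the coefficient $B$ with $\mu m_1\cdots m_5(1-R_c)$: it requires recognizing each term of $R_c$ (from \eqref{EQ:eqn 3.30} after substitution) as one of the summands produced when expanding the numerator of $\eta^*$, which is essentially algebraic but easy to mis-sign; once this identification is made, positivity of $A$ and the uniqueness conclusion follow with no further work.
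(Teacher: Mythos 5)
Your proposal follows essentially the same route as the paper: reduce the steady-state system to the single linear equation $A\eta^*+B=0$ in the force of infection, observe $A>0$ unconditionally and $B=\mu m_1 m_2 m_3 m_4 m_5(1-R_c)$, and read off existence and uniqueness from the sign of $\eta^*=-B/A$. The argument is correct, and your explicit handling of the boundary case $R_c=1$ (where $\eta^*=0$ collapses to the DFE) is if anything slightly more careful than the paper's, which only remarks on $R_c<1$.
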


Now we will prove the local stability of endemic equilibrium.
\begin{theorem}
   The endemic equilibrium $P^*$ is locally asymptotically stable if $R_C>1$.
\end{theorem}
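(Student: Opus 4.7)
The plan is to apply the Castillo--Chavez and Song center manifold / bifurcation theorem at the threshold $R_c=1$, rather than attempting a direct Routh--Hurwitz analysis of the Jacobian at $P^*$. A direct approach would force us to control a sixth degree characteristic polynomial whose coefficients depend rationally on $P^*$, and $P^*$ itself is only implicit in $\eta^*$; the bifurcation approach bypasses this by working at the DFE, where the spectrum is already under control from the previous theorem.

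First I would reduce the system by noting that the $R$--equation decouples (the recovered class does not feed back into any other compartment), so local stability of $P^*$ is determined by the six dimensional subsystem in $(S,E,Q,A,I,J)$. I would pick $\beta$ as bifurcation parameter and let $\beta=\beta^*$ be the unique value at which $R_c(\beta^*)=1$. At $(P_1^0,\beta^*)$ the Jacobian has a simple zero eigenvalue and all other eigenvalues have negative real parts (this is exactly the information recovered from the proof of the DFE stability theorem, reading the characteristic equation at $\lambda=0$). Next I would compute a right null vector $\mathbf{w}=(w_1,\ldots,w_6)^T\geq 0$ and a left null vector $\mathbf{v}=(v_1,\ldots,v_6)\geq 0$ with $\mathbf{v}\mathbf{w}=1$. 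Row reduction yields $w_3=\gamma_1 w_2/m_2$, $w_4=pk_1 w_2/m_3$, $w_5=(1-p)k_1 w_2/m_4$, $w_6=(k_2 w_3+\gamma_2 w_5)/m_5$, with $w_1$ determined by the $S$--row; and $v_1=0$ because the $S$--column of the Jacobian equals $-\mu e_1$, while $v_2,\ldots,v_6$ are obtained analogously.

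Then I would compute the two bifurcation coefficients
\begin{align*}
a &= \sum_{k,i,j} v_k\, w_i\, w_j\, \frac{\partial^2 f_k}{\partial x_i\,\partial x_j}(P_1^0,\beta^*),\qquad
b = \sum_{k,i} v_k\, w_i\, \frac{\partial^2 f_k}{\partial x_i\,\partial \beta}(P_1^0,\beta^*),
\end{align*}
where $f_k$ denotes the $k$--th right hand side of \eqref{EQ:eqn 2.1}. The only nonlinearity lives in the incidence $S(\beta I + r_Q\beta Q+ r_A\beta A+ r_J\beta J)/N$; in $f_2$ (the $E$--equation) the second partials at the DFE simplify dramatically because $S^0/N^0=1$ and the derivative of $1/N$ with respect to each infected compartment evaluates to $-1/N^0=-\mu/\Pi$. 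Since $v_1=0$, the $S$--equation contribution drops out. A direct substitution shows
\begin{align*}
b &= v_2\,(r_Q w_3 + r_A w_4 + w_5 + r_J w_6) > 0,
\end{align*}
and $a$ reduces to $-2 v_2(\mu/\Pi)$ multiplied by a sum of products of the nonnegative $w_i$'s and the nonnegative modification factors, hence $a<0$. By the Castillo--Chavez and Song theorem a forward (supercritical) transcritical bifurcation therefore occurs at $R_c=1$, and the unique endemic equilibrium $P^*$ established in the previous theorem is locally asymptotically stable for $R_c$ sufficiently close to, and greater than, $1$.

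The main obstacle is the bookkeeping in the double sum defining $a$: expanding $S(\beta I+r_Q\beta Q+r_A\beta A+r_J\beta J)/N$ produces sixteen distinct second partials of $f_2$ evaluated at the DFE, each needing a careful sign, together with their counterparts in $f_1$ which are killed only because $v_1=0$. A more conceptual caveat is that the bifurcation theorem only certifies local asymptotic stability of $P^*$ in a neighborhood of $R_c=1$; claiming LAS uniformly for all $R_c>1$ would require an additional global argument (e.g.\ a Lyapunov function, or monotone system techniques), which I would flag as a genuine limitation of this route rather than try to force.
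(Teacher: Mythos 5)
Your proposal takes essentially the same route as the paper: the authors also invoke the Castillo--Chavez and Song theorem with $\beta$ as the bifurcation parameter, compute the same right and left null vectors at the DFE (with $v_1=v_7=0$ killing the $S$- and $R$-row contributions), and arrive at exactly your expressions $a=-\frac{2\beta^*\mu v_2}{\Pi}(r_Q w_3+r_A w_4+w_5+r_J w_6)(\sum_i w_i)<0$ and $b=v_2(r_Q w_3+r_A w_4+w_5+r_J w_6)>0$. The caveat you flag --- that this argument only certifies local asymptotic stability of $P^*$ for $R_c$ near $1$, not uniformly for all $R_c>1$ --- is a genuine limitation shared by the paper's own proof, which asserts the stronger conclusion without further argument.
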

\begin{proof}
Let $x = (x_1 , x_2 , x_3 , x_4 , x_5 , x_6, x_7)^T = ( S , E , Q, A, I , J, R )^T$. Thus, the model \eqref{EQ:eqn 2.1} can be re-written in the form $\frac{dx}{dt}=f(x)$, with $f ( x ) = ( f_1 ( x ) ,....., f_7 ( x ))$, as follows:

\begin{eqnarray}\label{EQ:eqn 3.81}
\displaystyle{\frac{dx_1}{dt}} &=& \Pi-\frac{x_1(\beta x_5+r_Q \beta x_3+r_A \beta x_4+r_J \beta x_6)}{x_1+x_2+x_3+x_4+x_5+x_6+x_7}-\mu x_1 ,\nonumber \\
\displaystyle{\frac{dx_2}{dt}} &=& \frac{x_1(\beta x_5+r_Q \beta x_3+r_A \beta x_4+r_J \beta x_6)}{x_1+x_2+x_3+x_4+x_5+x_6+x_7}-(\gamma_1+k_1+\mu)x_2,\nonumber \\
\displaystyle{\frac{dx_3}{dt}} &=& \gamma_1 x_2-(k_2+\sigma_1+\mu)x_3,\nonumber \\
\displaystyle{\frac{dx_4}{dt}} &=&  p k_1 x_2-(\sigma_2+\mu)x_4, \\
\displaystyle{\frac{dx_5}{dt}} &=& (1-p)k_1 x_2-(\gamma_2+\sigma_3+\mu)x_5, \nonumber \\
\displaystyle{\frac{dx_6}{dt}} &=& k_2 x_3+\gamma_2 x_5-(\delta+\sigma_4+\mu)x_6, \nonumber\\
\displaystyle{\frac{dx_7}{dt}} &=& \sigma_1 x_3+\sigma_2 x_4+\sigma_3 x_5+\sigma_4 x_6-\mu x_7, \nonumber
\end{eqnarray}

    The Jacobian matrix of the system \eqref{EQ:eqn 3.81} $J_{P_1^{0}}$ at DFE is given by
\begin{align*}
J_{P_1^{0}}={\begin{pmatrix}
	-\mu &0 &-r_Q \beta &-r_A \beta & -\beta &-r_J \beta &0  \\
	0&-(\gamma_1+k_1+\mu) & r_Q\beta &r_A\beta &\beta &r_J\beta&0 \\
	0 & \gamma_1 & -(k_2+\sigma_1+\mu)&0 & 0 &0 &0  \\
	0& p k_1 &0 & -(\sigma_2+\mu) &0 &0 &0 \\
	0 &(1-p)k_1 &0 &0 &-(\gamma_2+\sigma_3+\mu)&0 &0 \\
	0 &0 &k_2 & 0 &\gamma_2 &-(\delta+\sigma_4+\mu) &0 \\
	0 &0 &\sigma_1 &\sigma_2 &\sigma_3 &\sigma_4 &-\mu \\
	\end{pmatrix}},
\end{align*}

Here, we use the central manifold theory method to determine the local stability of the endemic equilibrium by taking $\beta$ as bifurcation parameter \cite{castillo2004dynamical}.  Select $\beta$ as the bifurcation parameter and gives critical value of $\beta$ at $R_C=1$ is given as

\begin{equation*}
\beta^*=\frac{(\gamma_1+k_1+\mu)(k_2+\sigma_1+\mu)(\sigma_2+\mu)(\gamma_2+\sigma_3+\mu)(\delta+\sigma_4+\mu)}{[r_Q \gamma_1(\sigma_2+\mu)(\gamma_2+\sigma_3+\mu)(\delta+\sigma_4+\mu)+r_A p k_1 (k_2+\sigma_1+\mu)(\gamma_2+\sigma_3+\mu)(\delta+\sigma_4+\mu) + Z]}    
\end{equation*}
where,
$Z=k_1(1-p)(k_2+\sigma_1+\mu)(\sigma_2+\mu)(\delta+\sigma_4+\mu)+r_J \gamma_1 k_2(\sigma_2+\mu)(\gamma_2+\sigma_3+\mu)
    +r_J(1-p)k_1 \gamma_2(k_2+\sigma_1+\mu)(\sigma_2+\mu)$

The Jacobian of \eqref{EQ:eqn 2.1} at $\beta=\beta^*$, denoted by $J_{P_1^{0}}|_{\beta=\beta^*}$ has a right eigenvector (corresponding to the zero eigenvalue) given by
$w=(w_1, w_2, w_3, w_4, w_5, w_6, w_7)^T$ , where
\begin{align*}
    & w_1 =-\frac{\gamma_1+k_1+\mu}{\mu}w_2, w_2=w_2>0, w_3=\frac{\gamma_1}{k_2+\sigma_1+\mu}w_2, w_4=\frac{p k_1}{\sigma_2+\mu}w_2,\\
    & w_5 =\frac{(1-p)k_1}{\gamma_2+\sigma_3+\mu}w_2, w_6=\frac{k_2 \gamma_1}{(\delta+\sigma_4+\mu)(k_2+\sigma_1+\mu)}w_2+\frac{\gamma_2(1-p)k_1}{(\delta+\sigma_4+\mu)(\gamma_2+\sigma_3+\mu)}w_2\\
    & w_7=\frac{1}{\mu}\Big[\frac{\sigma_1 \gamma_1}{k_2+\sigma_1+\mu}w_2+\frac{\sigma_2 p k_1}{\sigma_2+\mu}w_2+\frac{\sigma_3(1-p)k_1}{\gamma_2+\sigma_3+\mu]w_2}+\frac{\sigma_4 k_2 \gamma_1}{(\delta+\sigma+\mu)(k_2+\sigma_1+\mu)}w_2\\
    &+\frac{\sigma_4 \gamma_2(1-p)k_1}{(\delta+\sigma+\mu)(\gamma_2+\sigma_3+\mu)}w_2\Big].
    \end{align*}
    Similarly, from  $J_{P_1^{0}}|_{\beta=\beta^*}$, we obtain a left eigenvector $v=(v_1, v_2, v_3, v_4, v_5, v_6, v_7)$ (corresponding to the zero eigenvalue), where
    \begin{align*}
        & v_1=0, v_2=v_2>0, v_3=\frac{r_Q \beta^*}{k_2+\sigma_1+\mu}v_2+\frac{k_2 r_J \beta^*}{(k_2+\sigma_1+\mu)(\delta+\sigma_4+\mu)}v_2, v_4=\frac{r_A \beta^*}{\sigma_2+\mu}v_2,\\
        & v_5=\frac{\beta^*}{\gamma_2+\sigma_3+\mu}v_2+\frac{\gamma_2 r_J \beta^*}{(\gamma_2+\sigma_3+\mu)(\delta+\sigma_4+\mu)}v_2, v_6=\frac{r_J \beta^*}{\delta+\sigma_4+\mu}v_2, v_7=0.
    \end{align*}
    
    We calculate the following second order partial derivatives of $f_i$ at the disease-free equilibrium $P_1^0$ to show the stability of the endemic equilibrium and obtain
\begin{align}\nonumber
\frac{\partial^2 f_2}{\partial x_3 \partial x_2}&=-\frac{\beta r_Q \mu}{\Pi}, \frac{\partial^2 f_2}{\partial x_4 \partial x_2}=-\frac{\beta r_A \mu}{\Pi},
 \frac{\partial^2 f_2}{\partial x_5 \partial x_2}=-\frac{\beta \mu}{\Pi},
 \frac{\partial^2 f_2}{\partial x_6 \partial x_2}=-\frac{\beta r_J \mu}{\Pi},\\\nonumber
\frac{\partial^2 f_2}{\partial x_2 \partial x_3}&=-\frac{\beta r_Q \mu}{\Pi},
\frac{\partial^2 f_2}{\partial x_3 \partial x_3}=-\frac{2\beta r_Q \mu}{\pi},
\frac{\partial^2 f_2}{\partial x_4 \partial x_3}=-\frac{\beta r_Q \mu}{\Pi}-\frac{\beta r_A \mu}{\Pi},\\\nonumber
\frac{\partial^2 f_2}{\partial x_5 \partial x_3}&=-\frac{\beta r_Q \mu}{\Pi}-\frac{\beta \mu}{\pi},
\frac{\partial^2 f_2}{\partial x_6 \partial x_3}=-\frac{\beta r_Q \mu}{\pi}-\frac{\beta r_J \mu}{\Pi},
\frac{\partial^2 f_2}{\partial x_7 \partial x_3}=-\frac{\beta r_Q \mu}{\Pi}, \frac{\partial^2 f_2}{\partial x_2 \partial x_4}=-\frac{\beta r_A \mu}{\Pi},\\\nonumber
\frac{\partial^2 f_2}{\partial x_3 \partial x_4}&=-\frac{\beta r_A \mu}{\Pi}-\frac{\beta r_Q \mu}{\Pi}, \frac{\partial^2 f_2}{\partial x_4 \partial x_4}=-\frac{2 \beta r_A \mu}{\Pi},
\frac{\partial^2 f_2}{\partial x_5 \partial x_4}=-\frac{\beta r_A \mu}{\Pi}-\frac{\beta \mu}{\Pi},\\\nonumber
\frac{\partial^2 f_2}{\partial x_6 \partial x_4}&=-\frac{\beta r_A \mu}{\Pi}-\frac{\beta r_J \mu}{\Pi},
\frac{\partial^2 f_2}{\partial x_7 \partial x_4}=-\frac{\beta r_A \mu}{\Pi}, \frac{\partial^2 f_2}{\partial x_2 \partial x_5}=-\frac{\beta \mu}{\Pi},\\\nonumber
\frac{\partial^2 f_2}{\partial x_3 \partial x_5}&=-\frac{\beta \mu}{\Pi}-\frac{\beta r_Q \mu}{\Pi}, \frac{\partial^2 f_2}{\partial x_4 \partial x_5}=-\frac{\beta \mu}{\Pi}-\frac{\beta r_A \mu}{\Pi}, \frac{\partial^2 f_2}{\partial x_5 \partial x_5}=-\frac{2 \beta \mu}{\Pi},\\\nonumber
\frac{\partial^2 f_2}{\partial x_6 \partial x_5}&=-\frac{\beta \mu}{\pi}-\frac{\beta r_J \mu}{\Pi},
\frac{\partial^2 f_2}{\partial x_7 \partial x_5}=-\frac{\beta \mu}{\Pi}, \frac{\partial^2 f_2}{\partial x_2 \partial x_6}=-\frac{\beta r_J \mu}{\Pi}, \frac{\partial^2 f_2}{\partial x_3 \partial x_6}=-\frac{\beta r_J \mu}{\Pi}-\frac{\beta r_Q \mu}{\Pi},\\\nonumber
\frac{\partial^2 f_2}{\partial x_4 \partial x_6}&=-\frac{\beta r_J \mu}{\Pi}-\frac{\beta r_A \mu}{\Pi},
\frac{\partial^2 f_2}{\partial x_5 \partial x_6}=-\frac{\beta r_J \mu}{\Pi}-\frac{\beta \mu}{\Pi}, \frac{\partial^2 f_2}{\partial x_6 \partial x_6}=-\frac{2 \beta r_J \mu}{\Pi}, \frac{\partial^2 f_2}{\partial x_7 \partial x_6}=-\frac{\beta r_J \mu}{\Pi},\\\nonumber
\frac{\partial^2 f_2}{\partial x_3 \partial x_7}&=-\frac{\beta r_Q \mu}{\Pi}, \frac{\partial^2 f_2}{\partial x_4 \partial x_7}=\frac{\beta r_A \mu}{\Pi},
\frac{\partial^2 f_2}{\partial x_5 \partial x_7}=-\frac{\beta \mu}{\Pi}, \frac{\partial^2 f_2}{\partial x_6 \partial x_7}=-\frac{\beta r_J \mu}{\Pi}
\end{align}
    Now we calculate the coefficients $a$ and $b$ defined in Theorem 4.1 \cite{castillo2004dynamical} of Castillo–Chavez and Song as follow
    \begin{align*}
        a= \sum_{k,i,j=1}^{7} v_k w_i w_j\frac{\partial^2 f_k(0, 0)}{\partial x_i \partial x_j}
    \end{align*}
    and
    \begin{align*}
        b= \sum_{k,i=1}^{7} v_k w_i\frac{\partial^2 f_k(0, 0)}{\partial x_i \partial \beta}
    \end{align*}
    Replacing the values of all the second-order derivatives measured at DFE and $\beta=\beta^*$, we get
    \begin{align*}
        a&=-\frac{2 \beta^* \mu v_2}{\Pi}(r_Q w_3+r_A w_4+w_5+r_J w_6)(w_2+w_3+w_4+w_5+w_6+w_7)<0
    \end{align*}
    and 
    \begin{align*}
        b&= v_2(r_Q w_3+r_A w_4+w_5+r_J w_6)>0
    \end{align*}
    Since $a<0$ and $b>0$ at $\beta=\beta^*$, therefore using the Remark 1 of the Theorem 4.1 stated in \cite{castillo2004dynamical}, a transcritical bifurcation occurs at $R_C=1$ and the unique endemic equilibrium is locally asymptotically stable for $R_C>1$.
\end{proof}

The transcritical bifurcation diagram is depicted in Fig. \ref{Fig:threshold_r0}.

\begin{figure}[t]
	\includegraphics[width=1.0\textwidth]{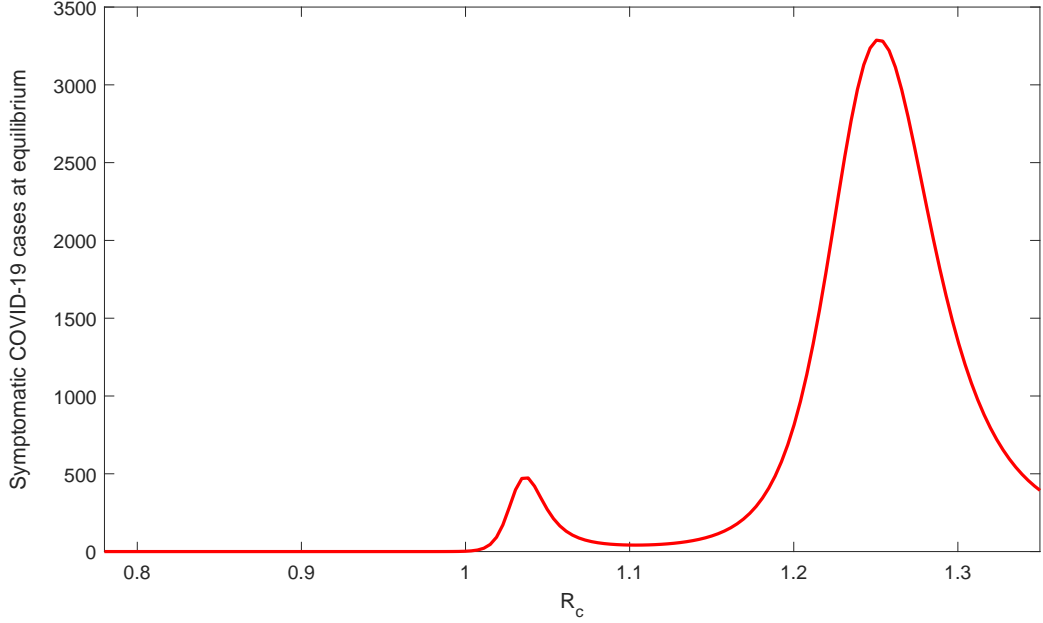}
	\caption{Forward bifurcation diagram with respect to $R_c$. All the fixed parameters are taken from Table \ref{tab:mod1} with $\gamma_1=0.0001$, $\gamma_2=0.0001$, $k_2=0.0632$, $\sigma_1=0.2158$, $\sigma_2=0.03$ $\sigma_4=0.4521$ and $0.2 < \beta < 0.35$.}
	\label{Fig:threshold_r0}
\end{figure}

\subsection{\textbf{Threshold analysis}}
In this section the impact of quarantine and isolation is measured qualitatively on the disease transmission dynamics. A threshold study of the parameters correlated with the quarantine of exposed individuals $\gamma_1 $and the isolation of the infected symptomatic individuals $\gamma_2$ is performed by measuring the partial derivatives of the control reproduction number $R_c$ with respect to these parameters. We observe that
\begin{align*}
    \frac{\partial R_c}{\partial \gamma_1}&=\frac{r_Q \beta(k_1+\mu)}{(\gamma_1+k_1+\mu)^2(k_2+\sigma_1+\mu)}
     -\frac{r_A \beta p k_1}{(\gamma_1+k_1+\mu)^2(\sigma_2+\mu)}-\frac{\beta k_1 (1-p)}{(\gamma_1+k_1+\mu)^2(\gamma_2+\sigma_3+\mu)}\\
     &+\frac{r_J \beta}{(\gamma_1+k_1+\mu)^2(\delta+\sigma_4+\mu)}\Big[\frac{k_2(k_1+\mu)}{k_2+\sigma_1+\mu}-\frac{(1-p)k_1 \gamma_2}{\gamma_2+\sigma_3+\mu}\Big]
\end{align*}
so that,
    $\frac{\partial R_c}{\partial \gamma_1}<0$ $(>0)$ iff $r_Q<r_{\gamma_1}$ ($r_Q>r_{\gamma_1}$)\\
    where
\begin{align*}
    0<r_{\gamma_1}&=\frac{k_2+\sigma_1+\mu}{k_1+\mu}\Big[\frac{r_A p k_1}{\sigma_2+\mu}+\frac{k_1(1-p)}{\gamma_2+\sigma_3+\mu}\Big]\\
    & +\frac{r_J(k_2+\sigma_1+\mu)}{(k_1+\mu)(\delta+\sigma_4+\mu)}\Big[\frac{(1-p)k_1 \gamma_2}{\gamma_2+\sigma_3+\mu}-\frac{k_2(k_1+\mu)}{k_2+\sigma_1+\mu}\Big]
\end{align*}
From the previous analysis it is obvious that if the relative infectiousness of quarantine individuals $r_Q$ will not cross the threshold value $r_{\gamma_1}$, then quarantining of exposed individuals results in reduction of the control reproduction number $R_c$ and therefore reduction of the disease burden. On the other side, if $r_Q > r_{\gamma_1}$, then the control reproduction number $R_c$ would rise due to the increase in the quarantine rate and thus the disease burden will also rise and therefore the use of quarantine in this scenario is harmful. The result is summarized in the following way:
\begin{theorem}\label{EQ:eqn 3.7}
    For the model \eqref{EQ:eqn 2.1}, the use of quarantine of the exposed individuals will have positive (negative) population-level impact if $r_Q<r_{\gamma_1}$ $(r_Q>r_{\gamma_1})$.
\end{theorem}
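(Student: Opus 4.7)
The plan is to reduce the claim to a sign analysis of $\partial R_c/\partial \gamma_1$. By the globally asymptotically stable disease-free equilibrium result for $R_c<1$ proved above, together with the companion existence of a stable endemic equilibrium for $R_c>1$, a larger $R_c$ worsens the epidemic while a smaller $R_c$ reduces the ultimate disease burden. Thus the use of quarantine has positive (respectively negative) population-level impact precisely when increasing $\gamma_1$ decreases (respectively increases) $R_c$, i.e.\ when $\partial R_c/\partial \gamma_1 < 0$ (respectively $>0$). With this dictionary in place, the theorem becomes a purely algebraic statement about a rational function of $\gamma_1$.

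First, I would differentiate the closed-form expression for $R_c$ term by term. All five summands share the factor $(\gamma_1+k_1+\mu)^{-1}$; two of them, namely the $r_Q\beta\gamma_1$ summand and the $r_J\beta\gamma_1 k_2$ summand, additionally carry $\gamma_1$ in the numerator, so their derivatives produce $(k_1+\mu)$ in place of $\gamma_1$ via the identity
\begin{equation*}
\frac{d}{d\gamma_1}\!\left[\frac{\gamma_1}{\gamma_1+k_1+\mu}\right] = \frac{k_1+\mu}{(\gamma_1+k_1+\mu)^2},
\end{equation*}
while the remaining three summands contribute the usual $-1/(\gamma_1+k_1+\mu)^2$ times their numerators. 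Collecting these five contributions reproduces the formula for $\partial R_c/\partial\gamma_1$ displayed immediately above the theorem.

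Next, I would factor out the positive quantity $(\gamma_1+k_1+\mu)^{-2}$ and isolate the single $r_Q$-dependent summand on one side of the inequality $\partial R_c/\partial\gamma_1 < 0$. Multiplying through by the positive factor $(k_2+\sigma_1+\mu)/(k_1+\mu)$ preserves the inequality and yields precisely $r_Q < r_{\gamma_1}$ with the threshold exactly as defined in the statement; the reverse inequality $r_Q > r_{\gamma_1}$ follows by running the same rearrangement with the opposite sign. Combining this equivalence with the epidemiological interpretation from the first step gives the theorem.

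The only real obstacle is bookkeeping: one must carefully separate the two terms in which $\gamma_1$ appears in the numerator (whose $\gamma_1$-derivatives are positive) from the three in which it does not (whose $\gamma_1$-derivatives are negative), and keep signs consistent when solving for $r_Q$. No auxiliary positivity claim for $r_{\gamma_1}$ is needed within the proof itself, since the statement is a pure dichotomy about which side of $r_{\gamma_1}$ the value $r_Q$ falls on; the author's observation $0<r_{\gamma_1}$ is a biological remark rather than a logical prerequisite. Beyond this tidy algebra, the argument is elementary calculus on an explicit rational function.
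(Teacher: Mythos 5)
Your proposal is correct and follows essentially the same route as the paper: differentiate the closed-form expression for $R_c$ with respect to $\gamma_1$, factor out the positive quantity $\beta/(\gamma_1+k_1+\mu)^2$, and rearrange $\partial R_c/\partial\gamma_1<0$ by multiplying through by $(k_2+\sigma_1+\mu)/(k_1+\mu)$ to isolate $r_Q$ against the threshold $r_{\gamma_1}$. Your algebra reproduces the paper's displayed derivative and threshold exactly, and your remark that the positivity of $r_{\gamma_1}$ is not logically required for the dichotomy is a fair observation.
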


Similarly, measuring the partial derivatives of $R_c$ with respect to the isolation parameter $\gamma_2$is used to determine the effect of isolation of infected symptomatic individuals. Thus, we obtain
\begin{align*}
    \frac{\partial R_c}{\partial \gamma_2}&=\frac{r_J \beta(1-p)k_1}{(\gamma_1+k_1+\mu)(\gamma_2+\sigma_3+\mu)(\delta+\sigma_4+\mu)}-\frac{r_J \beta (1-p)k_1 \gamma_2}{(\gamma_1+k_1+\mu)(\gamma_2+\sigma_3+\mu)^2(\delta+\sigma_4+\mu)}\\
    & -\frac{\beta k_1 (1-p)}{(\gamma_1+k_1+\mu)(\gamma_2+\sigma_3+\mu)^2}
\end{align*}
Thus,
    $\frac{\partial R_c}{\partial \gamma_2}<0$ $(>0)$ iff $r_J<r_{\gamma_2}$ ($r_J>r_{\gamma_2}$)\\
    where
    \begin{align*}
        0<r_{\gamma_2}=\frac{\delta+\sigma_4+\mu}{\sigma_3+\mu}
    \end{align*}
The use of isolation of infected symptomatic individuals will also be effective in controlling the disease in the population if the relative infectiousness of the isolated individuals $r_J$ does not cross the threshold $r_{\gamma_2}$. The result is summarized below:
\begin{theorem}\label{EQ:eqn 3.8}
    For the model \eqref{EQ:eqn 2.1}, the use of isolation of infected symptomatic individuals will have positive (negative) population-level impact if $r_J<r_{\gamma_2}$ $(r_J>r_{\gamma_2})$.
\end{theorem}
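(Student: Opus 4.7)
The plan is to compute $\partial R_c/\partial \gamma_2$ in closed form, factor the result so that its sign is governed by a single bracketed expression, and then read off the threshold $r_{\gamma_2} = (\delta+\sigma_4+\mu)/(\sigma_3+\mu)$. Population-level impact is interpreted through the monotonicity of $R_c$ in $\gamma_2$: if $R_c$ strictly decreases in $\gamma_2$ then raising the isolation rate reduces secondary transmission and is beneficial, while if $R_c$ strictly increases in $\gamma_2$ then more isolation paradoxically grows the epidemic potential and is harmful.

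Concretely, I would first isolate the two terms of the expression for $R_c$ that depend on $\gamma_2$, namely
\begin{equation*}
T_1(\gamma_2)=\frac{\beta k_1(1-p)}{(\gamma_1+k_1+\mu)(\gamma_2+\sigma_3+\mu)}, \qquad
T_2(\gamma_2)=\frac{r_J\beta(1-p)k_1\,\gamma_2}{(\gamma_1+k_1+\mu)(\gamma_2+\sigma_3+\mu)(\delta+\sigma_4+\mu)},
\end{equation*}
since every other summand is independent of $\gamma_2$. Differentiating $T_1$ yields a manifestly negative contribution, and the quotient-rule derivative of $\gamma_2/(\gamma_2+\sigma_3+\mu)$ collapses to $(\sigma_3+\mu)/(\gamma_2+\sigma_3+\mu)^2$, so the two pieces combine with a common denominator $(\gamma_1+k_1+\mu)(\gamma_2+\sigma_3+\mu)^2$ and a common positive prefactor $\beta k_1(1-p)$.

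With that shared factor pulled out, the sign of $\partial R_c/\partial \gamma_2$ reduces to the sign of a single bracket of the form
\begin{equation*}
\frac{r_J(\sigma_3+\mu)}{\delta+\sigma_4+\mu}-1,
\end{equation*}
from which the threshold $r_{\gamma_2}=(\delta+\sigma_4+\mu)/(\sigma_3+\mu)$ is immediate: the derivative is negative exactly when $r_J<r_{\gamma_2}$ and positive exactly when $r_J>r_{\gamma_2}$. Combining this with the already-established fact that the disease dies out whenever $R_c<1$ and persists whenever $R_c>1$ (by the DFE stability theorem and the endemic-equilibrium existence theorem proved earlier), monotone reduction of $R_c$ in $\gamma_2$ translates directly into a positive population-level impact of isolation, and the reverse inequality into a negative impact.

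The main obstacle is not analytic but bookkeeping: one must correctly identify that only $T_1$ and $T_2$ contain $\gamma_2$ (the term $r_J\beta\gamma_1 k_2/[(\gamma_1+k_1+\mu)(k_2+\sigma_1+\mu)(\delta+\sigma_4+\mu)]$ is easy to mistake for a $\gamma_2$-dependent term at first glance because it sits near the isolated compartment) and then algebraically merge the two surviving fractions into a common form so that the threshold isolates cleanly. Once that combination is executed, the rest of the argument is a sign check and an interpretive sentence paralleling the proof of Theorem~\ref{EQ:eqn 3.7}.
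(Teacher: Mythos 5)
Your proposal is correct and follows essentially the same route as the paper: differentiate the two $\gamma_2$-dependent terms of $R_c$, combine them over the common denominator $(\gamma_1+k_1+\mu)(\gamma_2+\sigma_3+\mu)^2$, and read off the sign condition $r_J \lessgtr (\delta+\sigma_4+\mu)/(\sigma_3+\mu)$. The paper simply leaves the derivative as a sum of three fractions rather than factoring out the common prefactor, but the threshold and the interpretation via monotonicity of $R_c$ are identical.
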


The control reproduction number $R_c$ is a decreasing (non-decreasing) function of the quarantine and isolation parameters $\gamma_1$ and $\gamma_2$ if the conditions $r_Q<r_{\gamma_1}$ and $r_J<r_{\gamma_2}$ are respectively satisfied. See figure \ref{fig:threshold_gamma1_gamma2}(a) and \ref{fig:threshold_gamma1_gamma2}(b) obtained from model simulation in which the results correspond to the theoretical findings discussed.

\subsection{\textbf{Model without control and basic reproduction number}}
We consider the system in this section when there is no control mechanism, that is, in the absence of quarantined and isolated classes. Setting $\gamma_1=\gamma_2=0$ in the model \eqref{EQ:eqn 2.1} give the following reduce model
\begin{eqnarray}\label{EQ:eqn 3.3}
\displaystyle{\frac{dS}{dt}} &=& \Pi-\frac{S(\beta I+r_A \beta A)}{\hat{N}}-\mu S,\nonumber \\
\displaystyle{\frac{dE}{dt}} &=& \frac{S(\beta I+r_A \beta A)}{\hat{N}}-(k_1+\mu)E,\nonumber \\
\displaystyle{\frac{dA}{dt}} &=&  p k_1 E-(\sigma_2+\mu)A, \\
\displaystyle{\frac{dI}{dt}} &=& (1-p)k_1 E-(\sigma_3+\mu)I, \nonumber \\
\displaystyle{\frac{dR}{dt}} &=& \sigma_2 A+\sigma_3 I-\mu R, \nonumber
\end{eqnarray}
Where $\hat{N}= S+E+A+I+R$.
The diseases-free equilibrium can be obtained for the system \eqref{EQ:eqn 3.3} by putting $E=0, A=0, I=0$, which is denoted by $P_2^{0}=(S^0,0,0,0,R^0),$ where\\
\begin{align*}
S^0=\frac{\Pi}{\mu}, R^0=0.
\end{align*}
We will follow the convention that the basic reproduction number is defined in the absence of control measure, denoted by $R_0$ whereas we calculate the control reproduction number when the control measure are in the place. The basic reproduction number $R_0$ is defined as the expected
number of secondary infections produced by a single infected individual in a fully susceptible population during his infectious period \cite{anderson1991may, diekmann2000mathematical,hethcote2000mathematics}. We calculate $R_0$ in the same way as we calculate $R_c$ by using next generation operator method \cite{van2002reproduction}. Now we calculate the jacobian of $\mathcal{F}$ and $\mathcal{V}$ at DFE $P_2^{0}$

\begin{align*}
F=\frac{\partial \mathcal{F}}{\partial X}=\begin{pmatrix}
0 & r_A \beta & \beta \\
0&0 & 0   \\
0 &0 & 0 \\
\end{pmatrix},
V=\frac{\partial \mathcal{V}}{\partial X}=\begin{pmatrix}
\gamma_1+k_1+\mu &0 & 0 \\
-p k_1 & \sigma_2+\mu &0  \\
-(1-p)k_1 &0  &\gamma_2+\sigma_3+\mu  \\
\end{pmatrix}.
\end{align*}
Following \cite{heffernan2005perspectives}, $R_0=\rho(FV^{-1})$, where $\rho$ is the spectral radius of the next-generation matrix ($FV^{-1}$). Thus, from the model \eqref{EQ:eqn 3.3}, we have the
following expression for $R_0$:
\begin{align}
R_0&=\frac{r_A \beta p k_1}{(k_1+\mu)(\sigma_2+\mu)}+\frac{\beta k_1(1-p)}{(k_1+\mu)(\sigma_3+\mu)}
\end{align}

 Thus, $R_0$ is $R_c$ with $\gamma_1= \gamma_2=0$.
 \subsubsection{\textbf{Stability of DFE of the model \ref{EQ:eqn 3.3}}}
 \begin{theorem}
 	The diseases free equilibrium (DFE)  $P_2^{0}=(S^0,0,0,0,R^0)$ of the system \eqref{EQ:eqn 3.3} is locally asymptotically stable if $R_0<1$ and unstable if $R_0>1$.	
 \end{theorem}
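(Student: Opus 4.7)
The plan is to mimic exactly the argument used in the proof of the analogous DFE theorem for the full model (system (2.1)), but now applied to the reduced system (3.3). First I would write out the Jacobian of (3.3) at $P_2^{0}$. Because $S$ enters only its own equation at the DFE and $R$ is an outflow compartment, the Jacobian is block triangular with respect to the ordering $(S,R \mid E,A,I)$; two eigenvalues are immediately isolated as $\lambda = -\mu$ (coming from the $S$- and $R$-rows), and the remaining spectrum is determined by the $3\times 3$ block acting on $(E,A,I)$.

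Next I would expand $\det(J_{P_2^{0}} - \lambda I) = 0$ for that $3\times 3$ block and, following the rearrangement carried out in the earlier DFE proof, rewrite the characteristic equation in the resolvent form $G_2(\lambda) = 1$, where
\begin{equation*}
G_2(\lambda) \;=\; \frac{r_A \beta p k_1}{(\lambda + k_1 + \mu)(\lambda + \sigma_2 + \mu)} \;+\; \frac{\beta(1-p)k_1}{(\lambda + k_1 + \mu)(\lambda + \sigma_3 + \mu)}.
\end{equation*}
By construction $G_2(0) = R_0$, and $G_2$ is a sum of two positive decreasing functions on $[0,\infty)$. For any $\lambda = x + iy$ with $x = \mathrm{Re}(\lambda) \ge 0$, the same term-by-term modulus estimate used in the proof of Theorem 3.1 yields $|G_2(\lambda)| \le G_2(x) \le G_2(0) = R_0$, because each factor $|\lambda + c|$ in the denominators is bounded below by $c$ whenever $c > 0$ and $x \ge 0$.

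From here the two cases follow cleanly. If $R_0 < 1$, then $|G_2(\lambda)| < 1$ for every $\lambda$ with $\mathrm{Re}(\lambda) \ge 0$, so no such $\lambda$ can satisfy $G_2(\lambda) = 1$; combined with the two previously isolated eigenvalues $\lambda = -\mu < 0$, every eigenvalue of $J_{P_2^{0}}$ has negative real part and $P_2^{0}$ is locally asymptotically stable. If instead $R_0 > 1$, I would use that $G_2$ restricted to the positive real axis is continuous with $G_2(0) = R_0 > 1$ and $\lim_{\lambda \to \infty} G_2(\lambda) = 0$; the intermediate value theorem then produces $\lambda^{*} > 0$ with $G_2(\lambda^{*}) = 1$, exhibiting a positive real eigenvalue and hence instability.

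There is essentially no hard step here; the proof is a specialisation of the full-model proof obtained by setting $\gamma_1 = \gamma_2 = 0$ and deleting the $Q$- and $J$-compartments. The only point that genuinely needs to be verified (rather than copied) is that the Jacobian block structure still permits the clean separation into the two $-\mu$ eigenvalues and the $3\times 3$ infected block, and that the rewritten scalar equation $G_2(\lambda) = 1$ indeed has $G_2(0)$ equal to the $R_0$ derived by the next-generation construction. Both are immediate once the matrix is written down, so the main obstacle is purely notational bookkeeping rather than any new analytic difficulty.
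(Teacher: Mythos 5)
Your proposal is correct and follows essentially the same route as the paper: compute the Jacobian at $P_2^{0}$, reduce the characteristic equation to the resolvent form $G_2(\lambda)=1$ with $G_2(0)=R_0$, bound $|G_2(\lambda)|$ by $G_2(0)$ on $\mathrm{Re}(\lambda)\ge 0$ for the stable case, and use the intermediate value theorem on the positive real axis for the unstable case. The only cosmetic difference is that you make the block-triangular separation of the two $-\mu$ eigenvalues explicit, whereas the paper absorbs this implicitly into the factored characteristic equation.
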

 \begin{proof}
 	We calculate the Jacobian of the system \eqref{EQ:eqn 3.3} at DFE $P_2^{0}$, is given by
 	\begin{align*}
 	J_{P_2^{0}}={\begin{pmatrix}
 		-\mu &0  &-r_A \beta & -\beta  &0  \\
 		0&-(k_1+\mu) &r_A\beta &\beta &0 \\
 		0& p k_1  & -(\sigma_2+\mu) &0  &0 \\
 		0 &(1-p)k_1 &0 &-(\sigma_3+\mu) &0 \\
 		0 &0  &\sigma_2 &\sigma_3  &-\mu \\
 		\end{pmatrix}}
 	\end{align*}
 	Let $\lambda$ be the eigenvalue of the matrix $J_{P_2^{0}}$. Then the characteristic equation is given by $det(J_{P_2^{0}}-\lambda I)=0$.
 	\begin{multline*}
 	\Rightarrow r_A \beta p k_1(\lambda+\sigma_3+\mu)+\beta k_1[(1-p)(\lambda+\sigma_2+\mu)]-(\lambda+k_1+\mu)(\lambda+\sigma_2+\mu)(\lambda+\sigma_3+\mu)=0.
 	\end{multline*}
 	which implies
 	\begin{multline*}
 	    \frac{r_A \beta p k_1}{(\lambda+k_1+\mu)(\lambda+\sigma_2+\mu)}+\frac{\beta k_1 (1-p)}{(\lambda+k_1+\mu)(\lambda+\sigma_3+\mu)}=1.\\
 	\end{multline*}
 	
 	Denote 
 	\begin{align*}
 	G_2(\lambda) &=\frac{r_A \beta p k_1}{(\lambda+k_1+\mu)(\lambda+\sigma_2+\mu)}+\frac{\beta k_1 (1-p)}{(\lambda+k_1+\mu)(\lambda+\sigma_3+\mu)}.
 	\end{align*}
 	We rewrite $G_2(\lambda)$ as $G_2(\lambda)=G_{21}(\lambda)+G_{22}(\lambda)$\\
 		Now if $Re(\lambda)\geq 0$, $\lambda=x+iy$, then
 		\begin{align*}
 		|G_{21}(\lambda)|&\leq \frac{r_A \beta p k_1}{|\lambda+k_1+\mu||\lambda+\sigma_2+\mu|} \leq G_{21}(x)\leq G_{21}(0)\\
 		|G_{22}(\lambda)|&\leq \frac{\beta k_1 (1-p)}{|\lambda+k_1+\mu||\lambda+\sigma_3+\mu|}\leq G_{22}(x)\leq G_{22}(0)
 		\end{align*}
 		
 		Then $G_{21}(0)+G_{22}(0)=G_2(0)=R_0<1$, which implies $|G_2(\lambda)|\leq 1$.\\
 		Thus for $R_0<1$, all the eigenvalues of the characteristics equation $G_2(\lambda)=1$ has negative real parts.
 		
 		Therefore if $R_0<1$, all eigenvalues are negative and hence DFE $P_2^{0}$ is locally asymptotically stable.
 		
 		Now if we consider $R_0>1$ i.e $G_2(0)>1$, then 
 		\begin{align*}
 		\lim\limits_{\lambda\rightarrow \infty} G_2(\lambda)=0.
 		\end{align*}
 		Then there exist $\lambda^*>0$ such that $G_2(\lambda^*)=1$.
 		
 		That means there exist positive eigenvalue $\lambda^*>0$ of the Jacobian matrix.
 		
 		Hence DFE $P_2^{0}$ is unstable whenever $R_0>1$.
 	\end{proof}
\begin{theorem}
	The diseases free equilibrium (DFE) $P_2^{0}=(S^0,0,0,0,R^0)$  is globally asymptotically stable for the system \eqref{EQ:eqn 3.3} if $R_0<1$ and unstable if $R_0>1$.
\end{theorem}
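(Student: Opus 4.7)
The plan is to mirror the Castillo--Chavez decomposition argument that was used for the full model in the preceding global-stability theorem, since the reduced system \eqref{EQ:eqn 3.3} has exactly the same structure after collapsing the quarantine and isolation compartments. I partition the state as $X=(S,R)\in\mathbb{R}^{2}$ (uninfected) and $V=(E,A,I)\in\mathbb{R}^{3}$ (infected), and rewrite \eqref{EQ:eqn 3.3} as $\dot X=F(X,V)$, $\dot V=G(X,V)$ with $G(X,0)=0$. The goal is to verify the two sufficient conditions of Castillo--Chavez et al.: (H1) the equilibrium $X^{*}=(\Pi/\mu,0)$ of the decoupled subsystem $\dot X=F(X,0)$ is globally asymptotically stable; and (H2) $G(X,V)=BV-\widehat G(X,V)$ with $B=D_{V}G(X^{*},0)$ a Metzler matrix and $\widehat G(X,V)\ge 0$ on the positively invariant region.

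For (H1), setting $V=0$ reduces $F(X,0)$ to the linear decoupled pair $\dot S=\Pi-\mu S$, $\dot R=-\mu R$, whose unique equilibrium $(\Pi/\mu,0)$ is clearly globally attracting. For (H2), I would take
\begin{equation*}
B=\begin{pmatrix}-(k_{1}+\mu) & r_{A}\beta & \beta\\ pk_{1} & -(\sigma_{2}+\mu) & 0\\ (1-p)k_{1} & 0 & -(\sigma_{3}+\mu)\end{pmatrix},
\end{equation*}
which is manifestly Metzler, and a direct computation would give
\begin{equation*}
\widehat G(X,V)=BV-G(X,V)=\bigl((r_{A}\beta A+\beta I)(1-S/\hat N),\,0,\,0\bigr)^{T}.
\end{equation*}
Positivity of the first component follows because the positivity/boundedness propositions (applied to \eqref{EQ:eqn 3.3} in exactly the same way as for \eqref{EQ:eqn 2.1}) yield $S\le\hat N$ on the invariant set, so $\widehat G(X,V)\ge 0$ there. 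The spectral abscissa of $B$ equals $s(B)<0$ iff $\rho(FV^{-1})=R_{0}<1$ by the standard next-generation equivalence, so both conditions hold under the hypothesis $R_{0}<1$, giving global asymptotic stability of $P_{2}^{0}$.

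The instability half when $R_{0}>1$ I would just inherit from the local-stability theorem for \eqref{EQ:eqn 3.3} already established above: the characteristic equation $G_{2}(\lambda)=1$ has $G_{2}(0)=R_{0}>1$ and $\lim_{\lambda\to\infty}G_{2}(\lambda)=0$, so by continuity there is a positive real root, producing an unstable eigenvalue of $J_{P_{2}^{0}}$ and hence instability of the DFE.

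The only non-trivial step is verifying the identity $\widehat G(X,V)=BV-G(X,V)$ and observing that the nonlinear correction collapses to $(r_{A}\beta A+\beta I)(1-S/\hat N)$; once that algebraic simplification is in hand, nonnegativity is immediate from $0\le S\le\hat N$. No Lyapunov function construction or delicate comparison argument is needed, so I expect no real obstacle beyond being careful that the positivity/boundedness lemmas used for the full model \eqref{EQ:eqn 2.1} carry over verbatim to \eqref{EQ:eqn 3.3} (which they do, since setting $\gamma_{1}=\gamma_{2}=0$ and dropping the two compartments only simplifies the estimates).
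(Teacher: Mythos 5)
Your proposal is correct and follows essentially the same route as the paper: the identical Castillo--Chavez decomposition with $X=(S,R)$, $V=(E,A,I)$, the same Metzler matrix $B$, and the same nonnegative correction term $\widehat G(X,V)=\bigl((r_A\beta A+\beta I)(1-S/\hat N),0,0\bigr)^{T}$, with nonnegativity coming from $S\le\hat N$ on the invariant region. The extra details you supply (the spectral-abscissa link to $R_0$ and inheriting instability from the local result) are consistent with, and slightly more explicit than, the paper's own argument.
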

\begin{proof}
	We rewrite the system \eqref{EQ:eqn 3.3}as
	\begin{align*}
	\frac{dX}{dt}&=F_1(X,V)\\
	\frac{dV}{dt}&=G_1(X,V), G_1(X,0)=0
	\end{align*}	
	where $X=(S, R)\in R_2$ (the number of uninfected individuals compartments),
	$V=(E, A, I)\in R_3 $ (the number of infected individuals compartments), and $P_2^{0}=(\frac{\Pi}{\mu},0,0,0,0)$ is the DFE of the system \eqref{EQ:eqn 3.3}. The global stability of the DFE is guaranteed if the following two conditions are satisfied:
	
	\begin{enumerate}
		\item For $\frac{dX}{dt}=F_1(X,0)$, $X^*$ is globally asymptotically stable,
		\item $G_1(X,V) = BV-\widehat{G}_1(X,V),$ $\widehat{G}_1(X,V)\geq 0$ for $(X,V)\in \hat{\Omega}$,
	\end{enumerate}
	where $B=D_VG_1(X^*,0)$ is a Metzler matrix and $\hat{\Omega}$ is the positively invariant set with respect to the model \eqref{EQ:eqn 3.3}. Following Castillo-Chavez et al \cite{castillo2002computation}, we check for aforementioned conditions.\\
	For system \eqref{EQ:eqn 3.3},
	\begin{align*}
	F_1(X,0)&=\begin{pmatrix}
	\Pi -\mu S\\
	0
	\end{pmatrix},\\
	B&=\begin{pmatrix}
	-(k_1+\mu) & r_A \beta & \beta \\
	p k_1 & -(\sigma_2+\mu) & 0\\
	(1-p)k_1 & 0 & -(\sigma_3+\mu)
	\end{pmatrix}\\
	\end{align*}
	and 
	\begin{align*}
	\widehat{G}_1(X,V)=\begin{pmatrix}
	r_A \beta A (1-\frac{S}{\hat{N}})+ \beta I(1-\frac{S}{\hat{N}})\\
	0\\
	0
	\end{pmatrix}.
	\end{align*}
	
	Clearly, $\widehat{G}_1(X,V)\geq 0$ whenever the state variables are inside $\hat{\Omega}$. Also it is clear that $X^*=(\frac{\Pi}{\mu},0)$ is a globally asymptotically stable equilibrium of the system $\frac{dX}{dt}=F_1(X,0)$. Hence, the theorem follows.
\end{proof}

%\section{Numerical results}
\section{Model Calibration and epidemic potentials}
We calibrated our model \eqref{EQ:eqn 2.1} to the daily new COVID-19 cases for the UK. Daily COVID-19 cases are collected for the period 6 March, 2020 - 30 June, 2020 \cite{Worldometer2020}. We divide the 116 data points into training period and testing periods, viz., 6 March - 15 June and 16 June - 30 June respectively. We fit the model \eqref{EQ:eqn 2.1} to daily new isolated cases of COVID-19 in the UK. Due to the highly transmissible virus, the notified cases are immediately isolated, and therefore it is convenient to fit the isolated cases to reported data. Also we fit the model \eqref{EQ:eqn 2.1} to cumulative isolated cases of COVID-19. We estimate the diseases transmission rates by humans, $\beta$ , quarantine rate of exposed individuals, $\gamma_1$, isolation rate of infected individual, $\gamma_2$, rate at which quarantined individuals are isolated, $k_2$, recovery rate from quarantined individuals, $\sigma_1$, recovery rate from asymptomatic individuals, $\sigma_2$, recovery rate from isolated individuals, $\sigma_4$, and initial population sizes. The COVID-19 data are fitted using the optimization function 'fminsearchbnd' (MATLAB, R2017a). The estimated parameters are given in Table \ref{tab:mod1}. We also estimate the initial conditions of the human population and the estimated values are given by Table \ref{tab:mod3}. The fitting of the daily isolated COVID-19 cases in the UK are displayed in Figure \ref{Fig:new_case_fit}.

\begin{figure}[ht] 
\centering
	\includegraphics[width=0.45\textwidth]{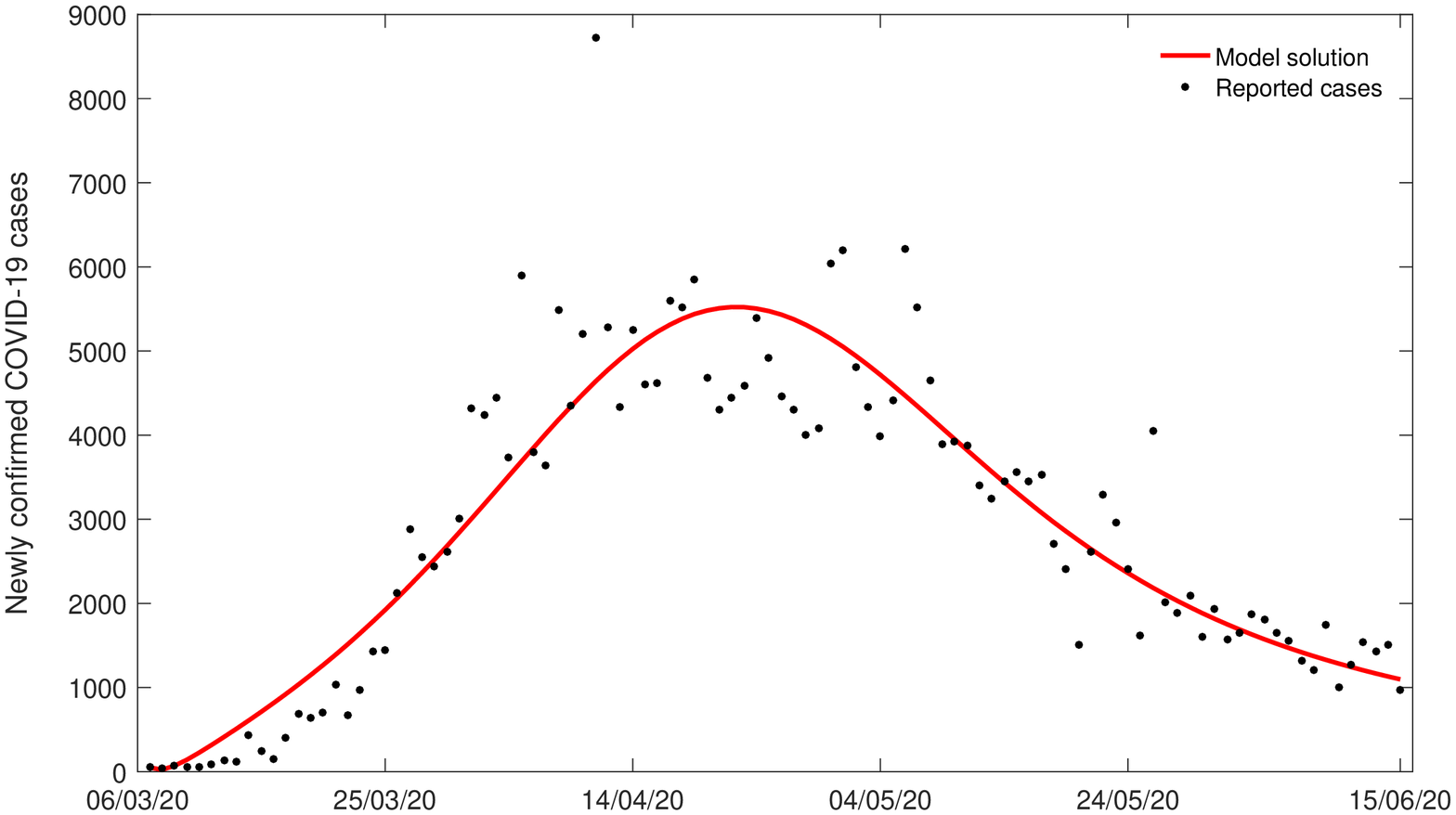}(a)
	\includegraphics[width=0.45\textwidth]{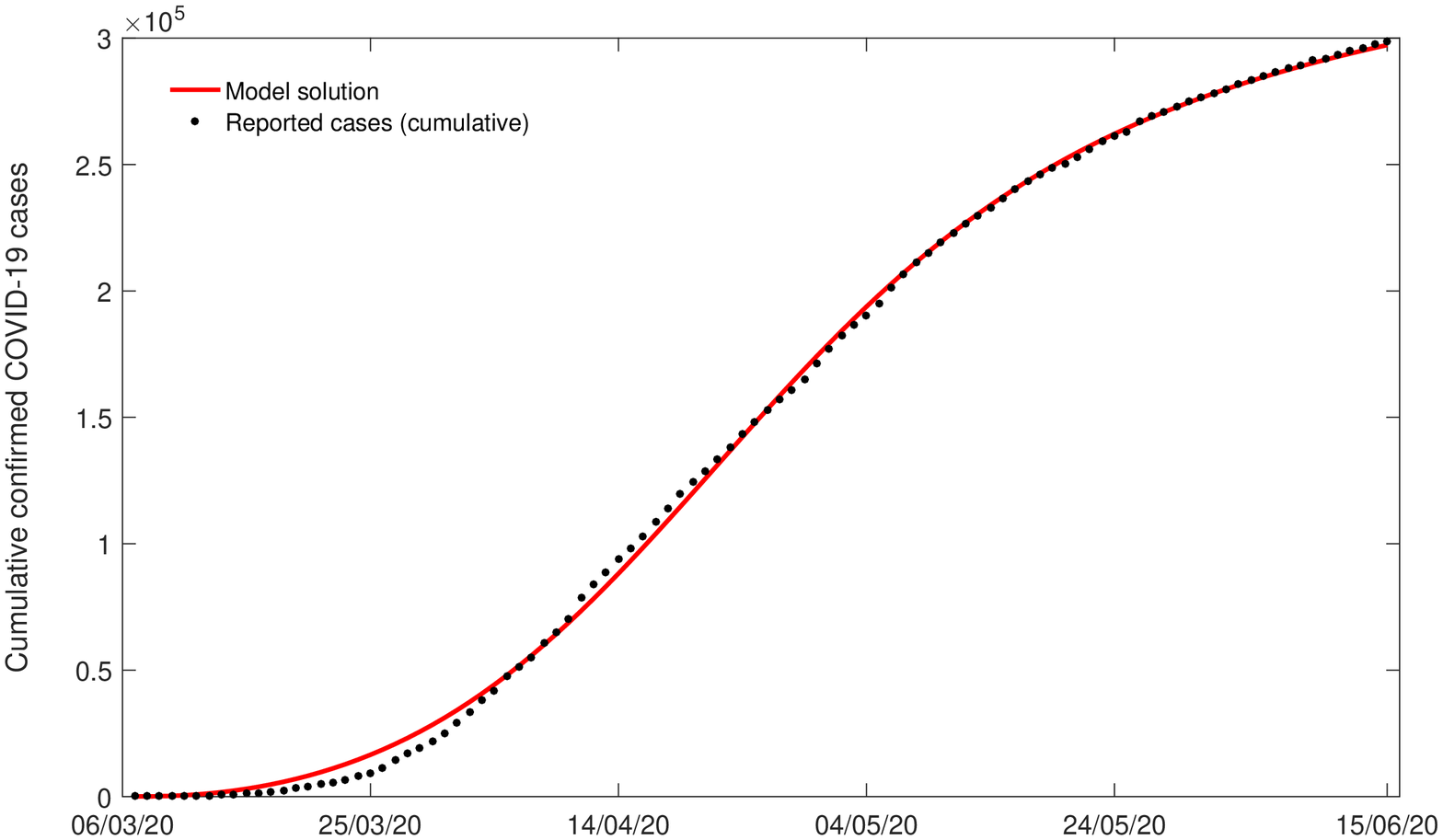}(b)
	\caption{(a) Model solutions fitted to daily new isolated COVID cases in the UK. (b) Model fitting with cumulative COVID-19 cases in the UK. Observed data points are shown in black circle and the solid red line depicts the model solutions.}
	\label{Fig:new_case_fit}
\end{figure}

\begin{table}[ht]
	\begin{center}
		\caption{Estimated initial population sizes for the UK.}
		\label{tab:mod3}
\begin{tabular}{c p{2cm} p{2cm}}

 \hline
Initial values & Value & Source \\
 \hline
$S(0)$ & 2000000 & Assumed \\
$E(0)$ & 103 & Estimated\\
$Q(0)$ & 0 & Assumed\\
$A(0)$ & 11016 & Estimated\\
$I(0)$ & 106 & Estimated \\
$J(0)$ & 48 & Data \\
$R(0)$ & 0 & Assumed \\
 \hline
\end{tabular}
 	\end{center}
\end{table}

Using these estimated parameters and the fixed parameters from Table \ref{tab:mod1}, we calculate the basic reproduction numbers ($R_0$) and control reproduction numbers ($R_c$) for the UK. The values for $R_0$ and $R_c$ are found to be 2.7048 and 2.3380 respectively. $R_c$ value is above unity, which indicates that they should increase the control interventions to limit future COVID-19 cases.  

\section{Short-term predictions}
In this section, the short-term prediction capability of the model \ref{EQ:eqn 2.1} is studied. Using parameters form Tables \ref{tab:mod1} and \ref{tab:mod3}, we simulate the newly isolated COVID-19 cases for the period 16 June, 2020 - 30 June, 2020 to check the accuracy of the predictions. Next, 10-day-ahead predictions are reported for the UK. The short-term prediction for the UK is depicted in Fig \ref{Fig:short_term_prediction}.

\begin{figure}[ht]
\includegraphics[width=1.0\textwidth]{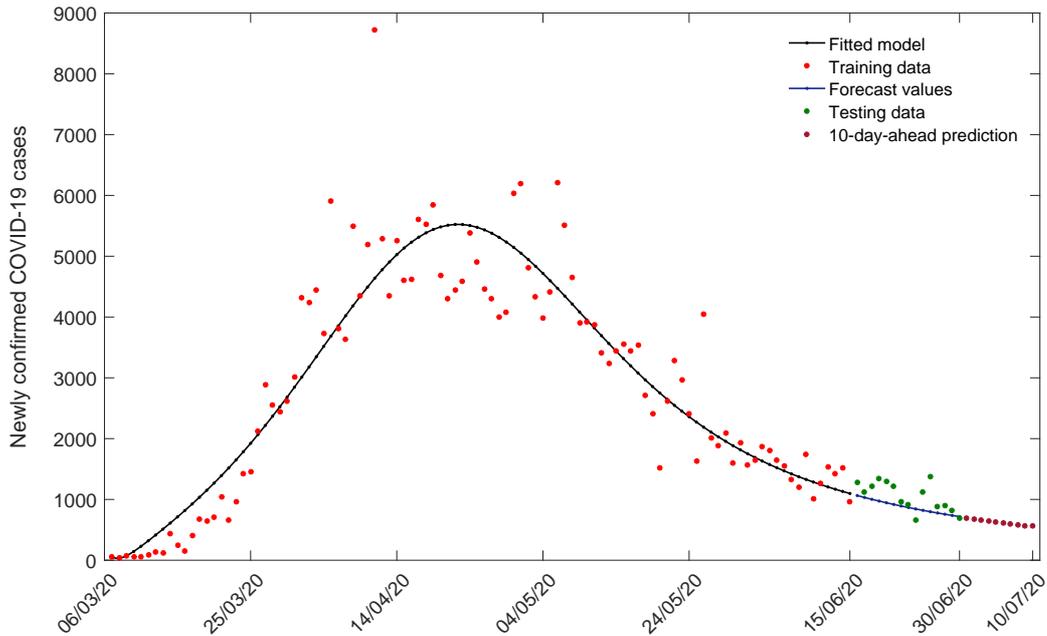}
\caption{Short term predictions for the UK. The blue line represent the predicted new isolated COVID cases while the solid dots are the actual cases.}
\label{Fig:short_term_prediction}
\end{figure}

We calculate two performance metrics, namely Mean Absolute Error (MAE) and Root Mean Square Error (RMSE) to assess the accuracy of the predictions. This is defined using a set of performance metrics as follows:\\
Mean Absolute Error (MAE):
\begin{align*}
     MAE=\frac{1}{N_p}\sum_{i=1}^{N_p} |Y(i)-\hat{Y}(i)|
\end{align*}
Root Mean Square Error (RMSE):
\begin{align*}
    RMSE=\sqrt{\frac{1}{N_p}\sum_{i=1}^{N_p} (Y(i)-\hat{Y}(i))^2}
\end{align*}
where $Y(i)$ represent original cases, $\hat{Y(i)}$ are predicted values and $N_p$ represents the sample size of the data. These performance metrics are found to be MAE=206.36 and RMSE=253.72. We found that the model performs excellently in case of the UK. The decreasing trend of newly isolated COVID-19 cases is also well captured by the model.

\section{Control strategies} 
In order to get an overview of most influential parameters, we compute the normalized sensitivity indices of the model parameters with respect to $R_c$. We have chosen parameters transmission rate between human population $\beta$, the control related parameters, $\gamma_1$, $\gamma_2$ and $k_2$, the recovery rates from quarantine individuals $\sigma_1$, asymptomatic individuals $\sigma_2$ and isolated individuals $\sigma_4$ and the effect of diseases induced mortality rate $\delta$ for sensitivity analysis. We compute normalized forward sensitivity indices of these parameters with respect to the control reproduction number $R_c$. We use the parameters from Table \ref{tab:mod1} and Table \ref{tab:mod3}. However, the mathematical definition of the normalized forward sensitivity index of a variable $m$ with respect to a parameter $\tau$ (where $m$ depends explicitly on the parameter $\tau$) is given as:

\begin{eqnarray*}
	X^\tau_m=\frac{\partial m}{\partial \tau}\times
	\frac{\tau}{m}.
\end{eqnarray*}

 The sensitivity indices of $R_c$ with respect to the parameters $\beta$, $\gamma_1$, $\gamma_2$, $k_2$, $\sigma_1$, $\sigma_2$, $\sigma_4$ and $\delta$ are given by Table \ref{tab:mod5}.

\begin{table}[ht]
	\begin{center}
		\caption{Normalized sensitivity indices of some parameters of the model \ref{EQ:eqn 2.1}}
		\label{tab:mod5}
\begin{tabular}{p{1.3cm} p{1.3cm} p{1.3cm} p{1.3cm} p{1.3cm} p{1.3cm} p{1.3cm} p{1.3cm}}

 \hline
 $X^{\beta}_{R_c}$ & $X^{\gamma_1}_{R_c}$ & $X^{\gamma_2}_{R_c}$ &  $X^{k_2}_{R_c}$ & $X^{\sigma_1}_{R_c}$ & $X^{\sigma_2}_{R_c}$ & $X^{\sigma_4}_{R_c}$ & $X^{\delta}_{R_c}$\\
 \hline
1.0000 & -0.1441 & -0.0268 & 0.0021 & -0.0879 & -0.4692 & -0.0757 & -0.0008\\
 \hline
\end{tabular}
 	\end{center}
\end{table}

The fact that $X^{\beta}_{R_c}=1$ means that if we increase 1\% in $\beta$, keeping other parameters be fixed, will produce $1$\% increase in $R_c$. Similarly, $X^{\sigma_2}_{R_c}=-0.4692$ means increasing the parameter $\sigma_2$ by $1$\%, the value of $R_c$ will be decrease by $0.4692$\% keeping the value of other parameters fixed. Therefore, the transmission rate between susceptible humans and COVID-19 infected humans is positively correlated and recovery rate from asymptomatic class is negatively correlated with respect to control reproduction number respectively. 

In addition, we draw the contour plots of $R_c$ with respect to the parameters $\gamma_1$ and $\gamma_2$ for the model \eqref{EQ:eqn 2.1} to investigate the effect of the control parameters on control reproduction number $R_c$, see Figure \ref{Fig:contour_plots1}.

\begin{figure}[ht]
\centering
	\includegraphics[width=0.45\textwidth]{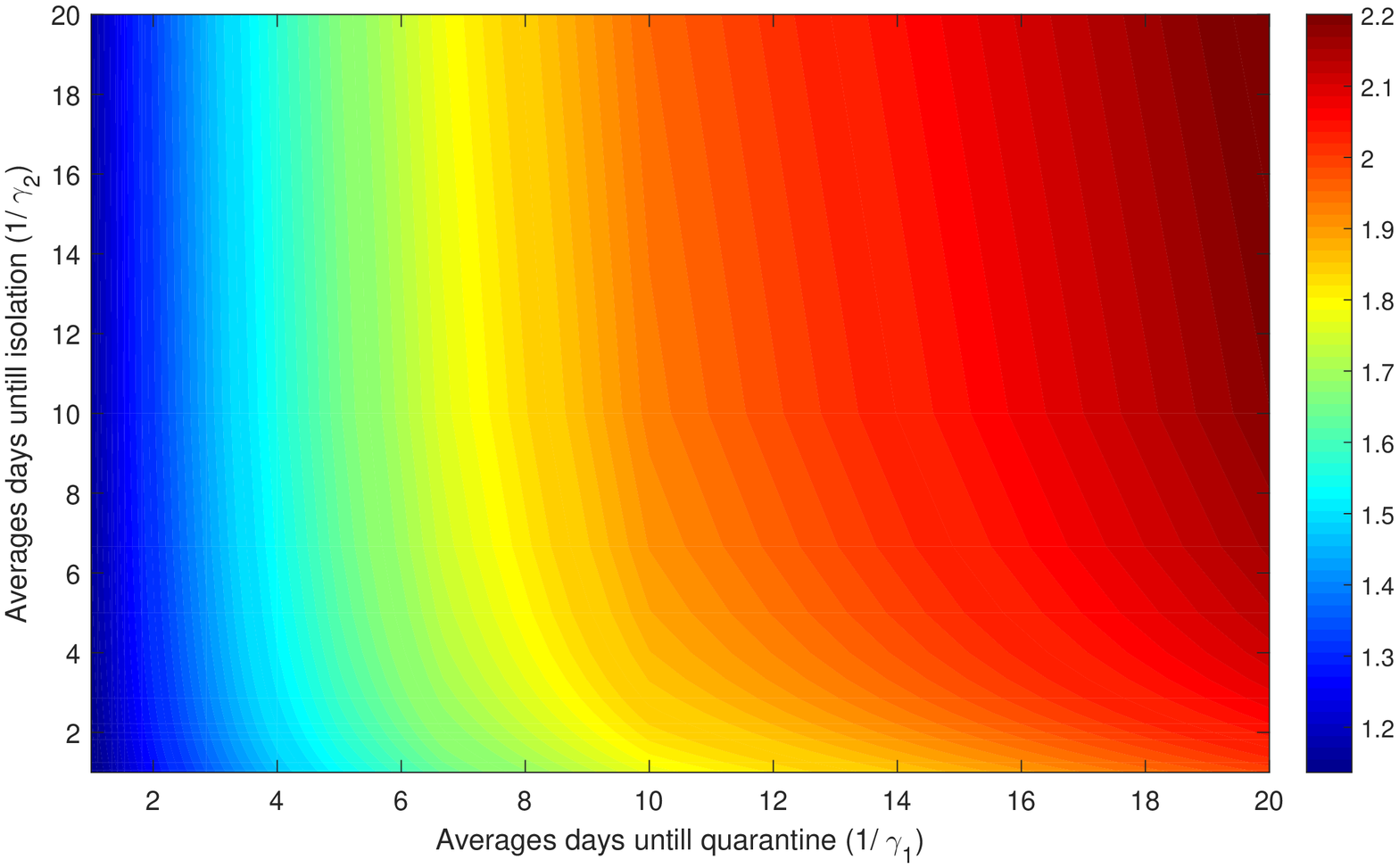}(a)
	\includegraphics[width=0.45\textwidth]{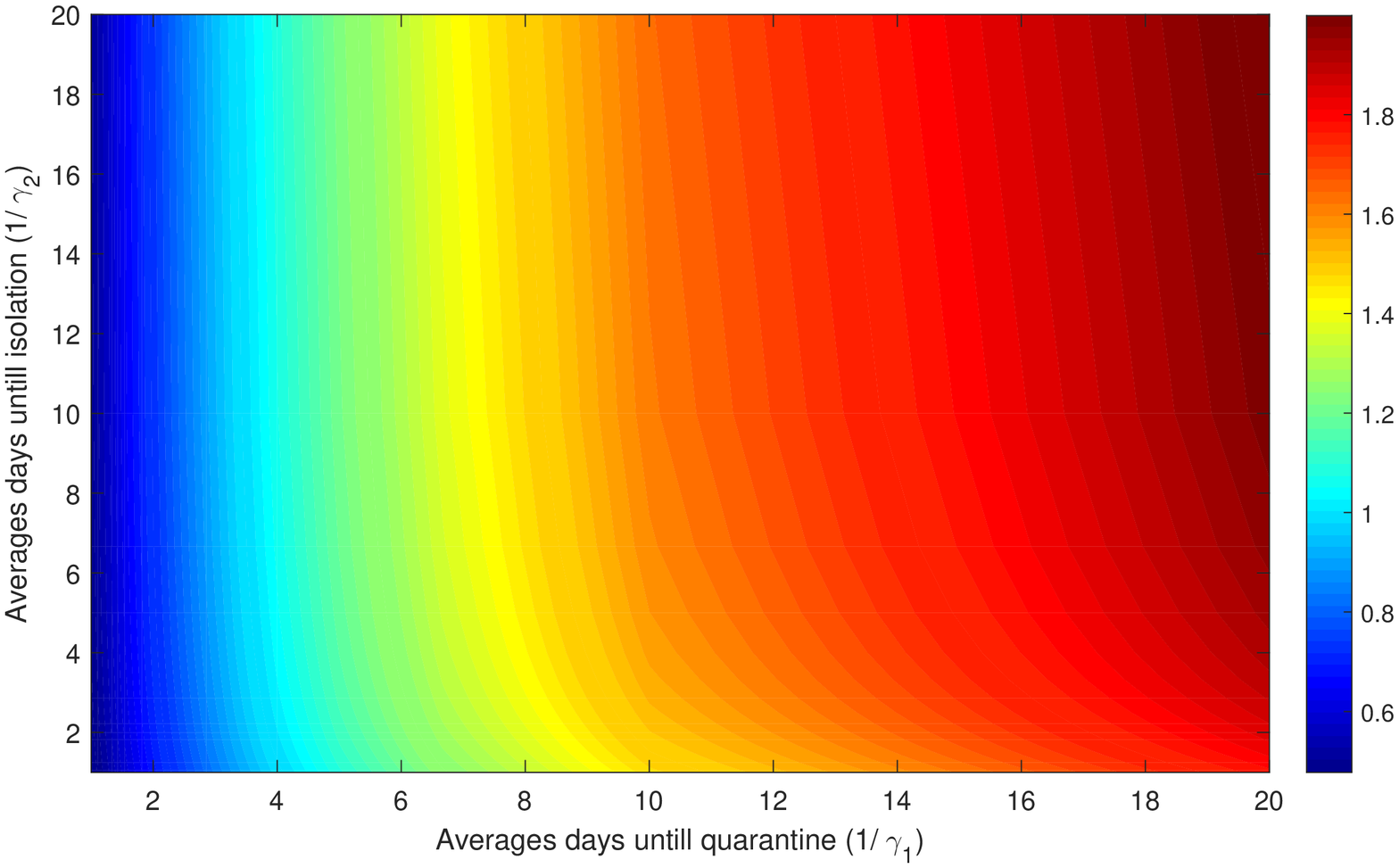}(b)
	\includegraphics[width=0.45\textwidth]{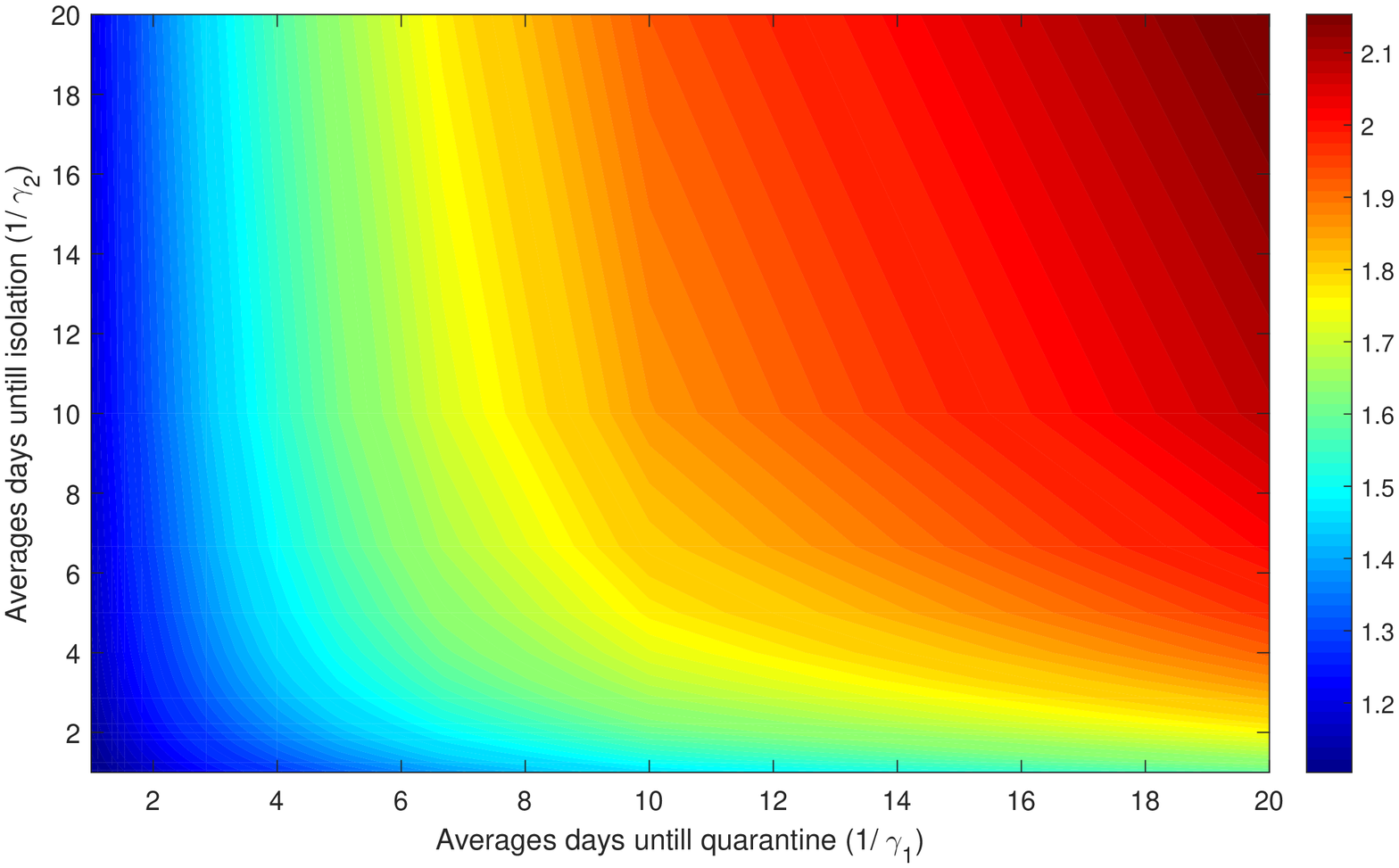}(c)
	\includegraphics[width=0.45\textwidth]{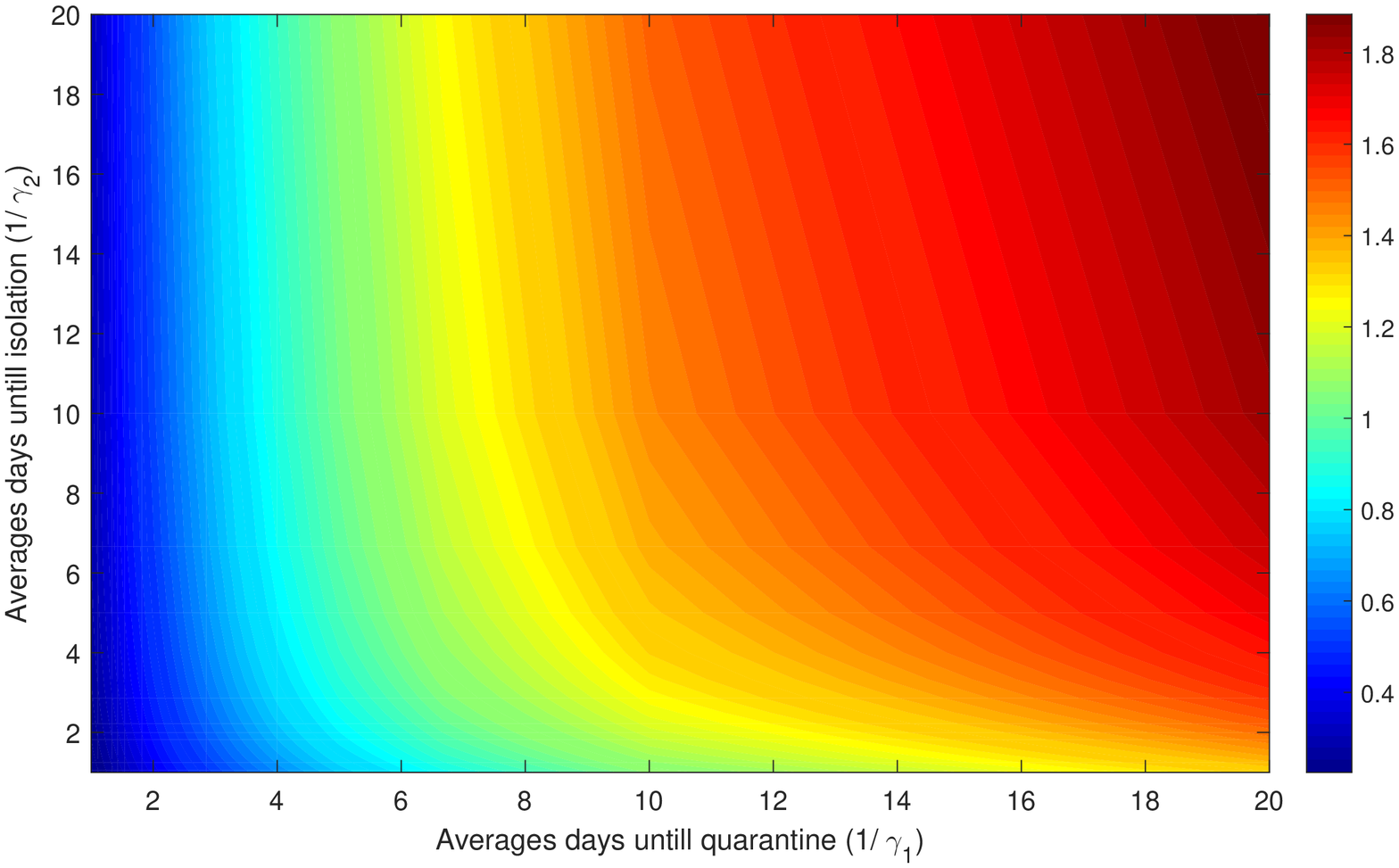}(d)
	\caption{Contour plots of $R_c$ versus average days to quarantine ($1/\gamma_1$) and isolation ($1/\gamma_2$) for the UK, (a) in the presence of both modification factors for quarantined ($r_Q$) and isolation ($r_J$); (b) in the presence of modification factors for isolation ($r_J$) only; (c) in the presence of modification factors for quarantined ($r_Q$) only and (d) in the absence of both modification factors for quarantined ($r_Q$) and isolation ($r_J$). All parameter values other than $\gamma_1$ and $\gamma_2$ are given in Table \ref{tab:mod1}.}
    \label{Fig:contour_plots1}
\end{figure}

The contour plots of Figure \ref{Fig:contour_plots1} show the dependence of $R_c$ on the quarantine rate $\gamma_1$ and the isolation rate $\gamma_2$ for the the UK. The axes of these plots are given as average days from exposed to quarantine ($1/\gamma_1$) and average days from starting of symptoms to isolation ($1/\gamma_2$). For both cases, the contours show that, increasing $\gamma_1$ and $\gamma_2$ reduces the amount of control reproduction number $R_c$ and, therefore, COVID cases. We find that quarantine and isolation are not sufficient to control the outbreak (see Figure \ref{Fig:contour_plots1}(a) and \ref{Fig:contour_plots1}(c)). With these parameter values, as $\gamma_1$ increases, $R_c$ decreases and similarly, when $\gamma_2$ increases, $R_c$ decreases. But, in the both cases $R_c>1$, and therefore the disease will persist in the population (i.e. the above control measures cannot lead to effective control of the epidemic). By contrast, our study shows that when the modification factor for quarantine become zero (so that $r_Q=0$), the outbreak can be controlled (see Figure \ref{Fig:contour_plots1}(b) and \ref{Fig:contour_plots1}(d)). From the above finding it follows that neither the quarantine of exposed individuals nor the isolation of symptomatic individuals will prevent the disease with the high value of the modification factor for quarantine. This control can be obtained by a significant reduction in COVID transmission during quarantine (that is reducing $r Q$ ). 

Furthermore, we study the effect of the parameters modification factor for quarantined individuals ($r_Q$), modification factor for isolated individuals ($r_J$) and transmission rate ($\beta$) on the cumulative new isolated COVID-19 cases ($J_{cum}$) in the UK. The cumulative number of isolated cases has been computed at day 100 (chosen arbitrarily). The effect of controllable parameters on ($J_{cum}$) are shown in Fig. \ref{fig:contour_plots2}.

\begin{figure}[ht]
    \centering
    \includegraphics[width=0.45\textwidth]{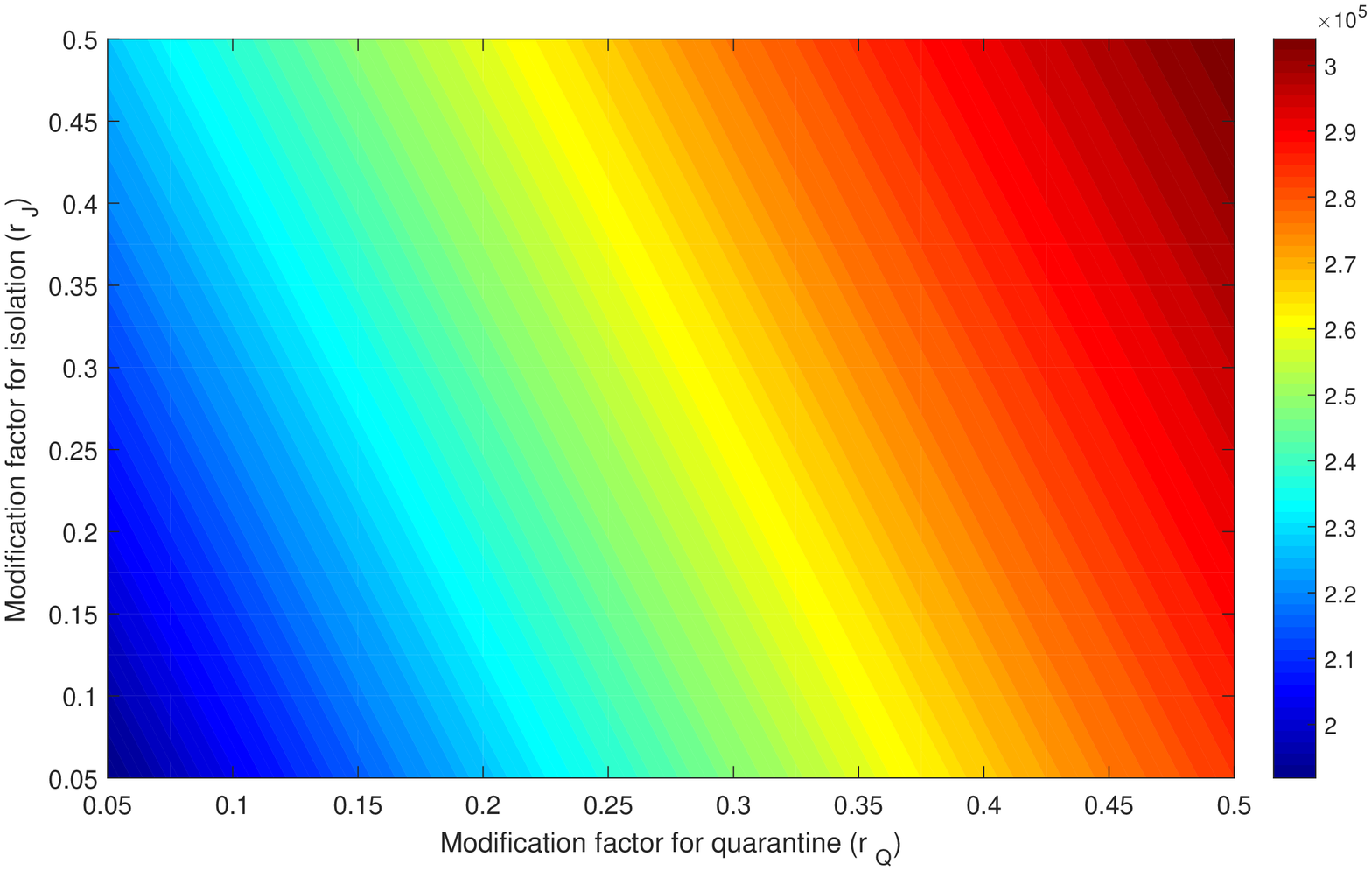}(a)
    \includegraphics[width=0.45\textwidth]{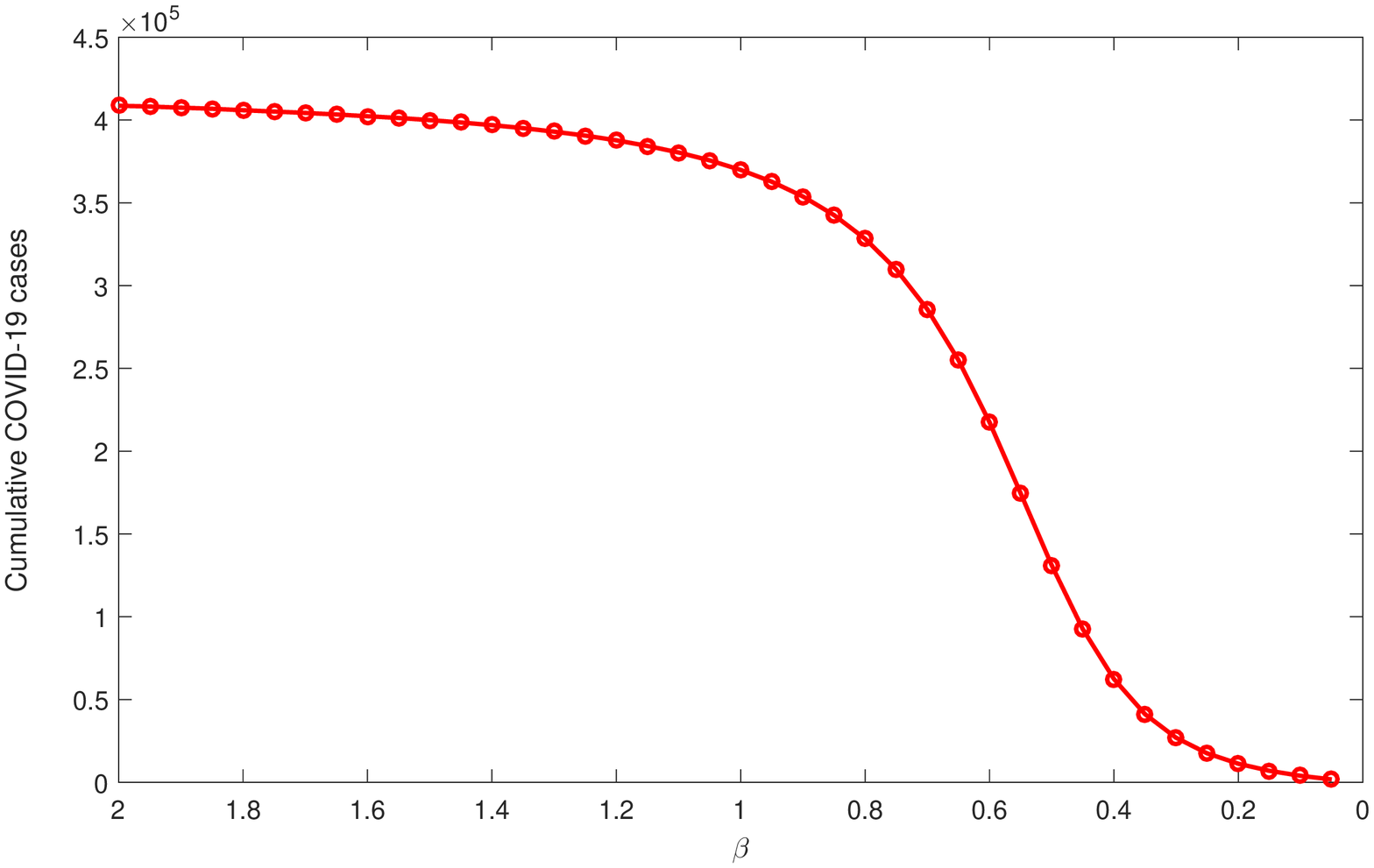}(b)
    \caption{Effect of controllable parameters $\gamma_1$, $\gamma_2$ and $\beta$ on the cumulative number of isolated COVID-19 cases. The left panel shows the variability of the $J_{cum}$ with respect to $\frac{1}{\gamma_1}$ and $\frac{1}{\gamma_2}$. The right panel shows $J_{cum}$ with decreasing transmission rate $\beta$.}
    \label{fig:contour_plots2}
\end{figure}

We observe that all the three parameters have significant effect on the cumulative outcome of the epidemic. From Fig. \ref{fig:contour_plots2}(a) it is clear that decrease in the modification factor for quarantined and isolated individuals will significantly reduce the value of $J_{cum}$. On the other hand Fig. \ref{fig:contour_plots2}(b) indicates, reduction in transmission rate will also slow down the epidemic significantly. These results point out that all the three control measures are quite effective in reduction of the COVID-19 cases in the UK. Thus, quarantine and isolation efficacy should be increased by means of proper hygiene and personal protection by health care stuffs. Additionally, the transmission coefficient can be reduced by avoiding contacts with suspected COVID-19 infected cases. 

Furthermore, We numerically calculated the thresholds $r_{\gamma_1}$ and $r_{\gamma_2}$ for the UK. The analytical expression of the thresholds are given in subsection ($3.5$). The effectiveness of quarantine and isolation depends on the values of the modification parameters $r_Q$ and $r_J$ for the reduction of infected individuals. The threshold value of $r_Q$ corresponding to quarantine parameter $\gamma_1$ is $r_{\gamma_1} = 0.9548$ and the threshold value of $r_J$ corresponding to isolation parameter $\gamma_2$ is $r_{\gamma_2} = 0.9861$.

\begin{figure}[ht]
\begin{center}
    \includegraphics[width=0.45\textwidth]{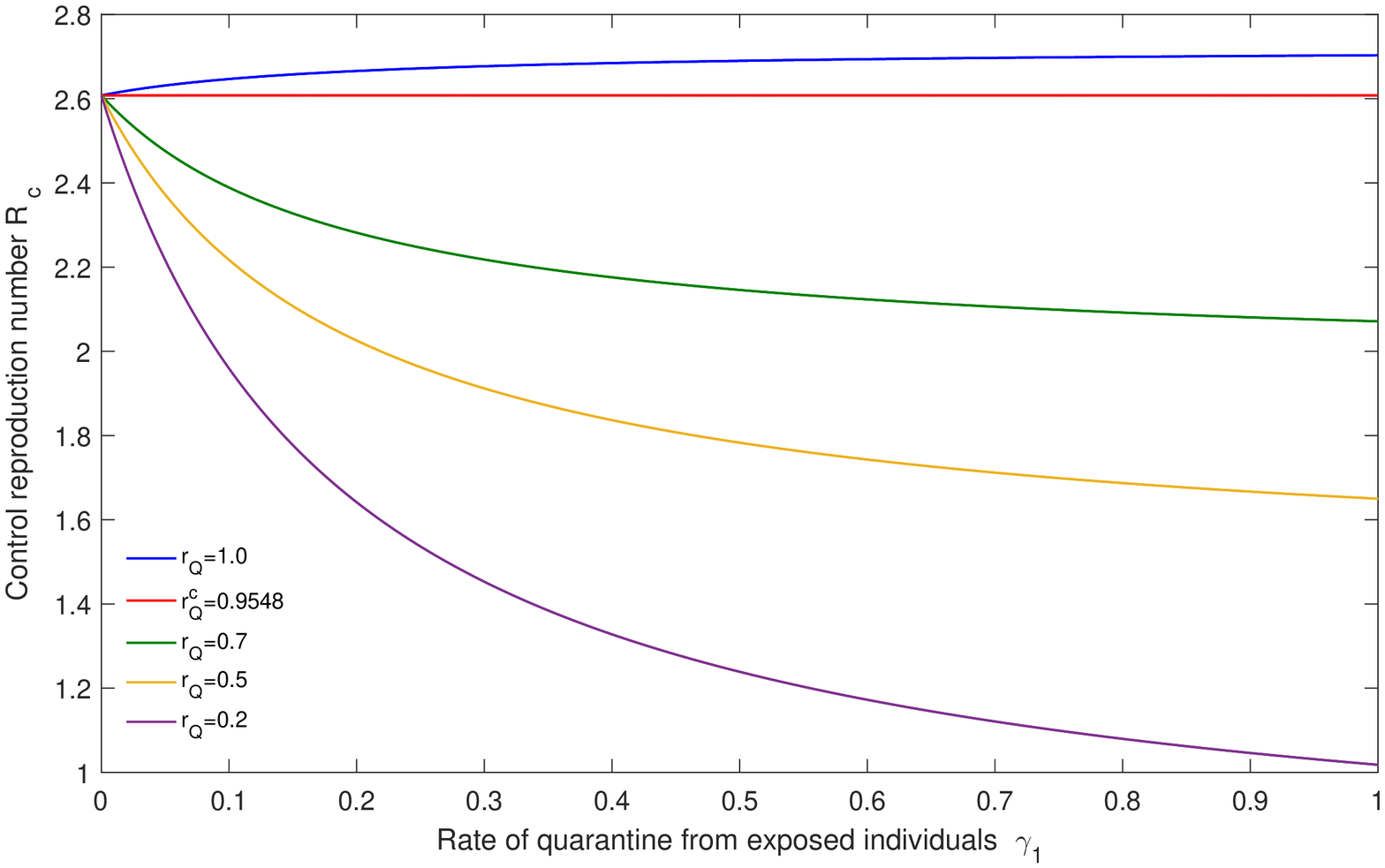}(a)
    \includegraphics[width=0.45\textwidth]{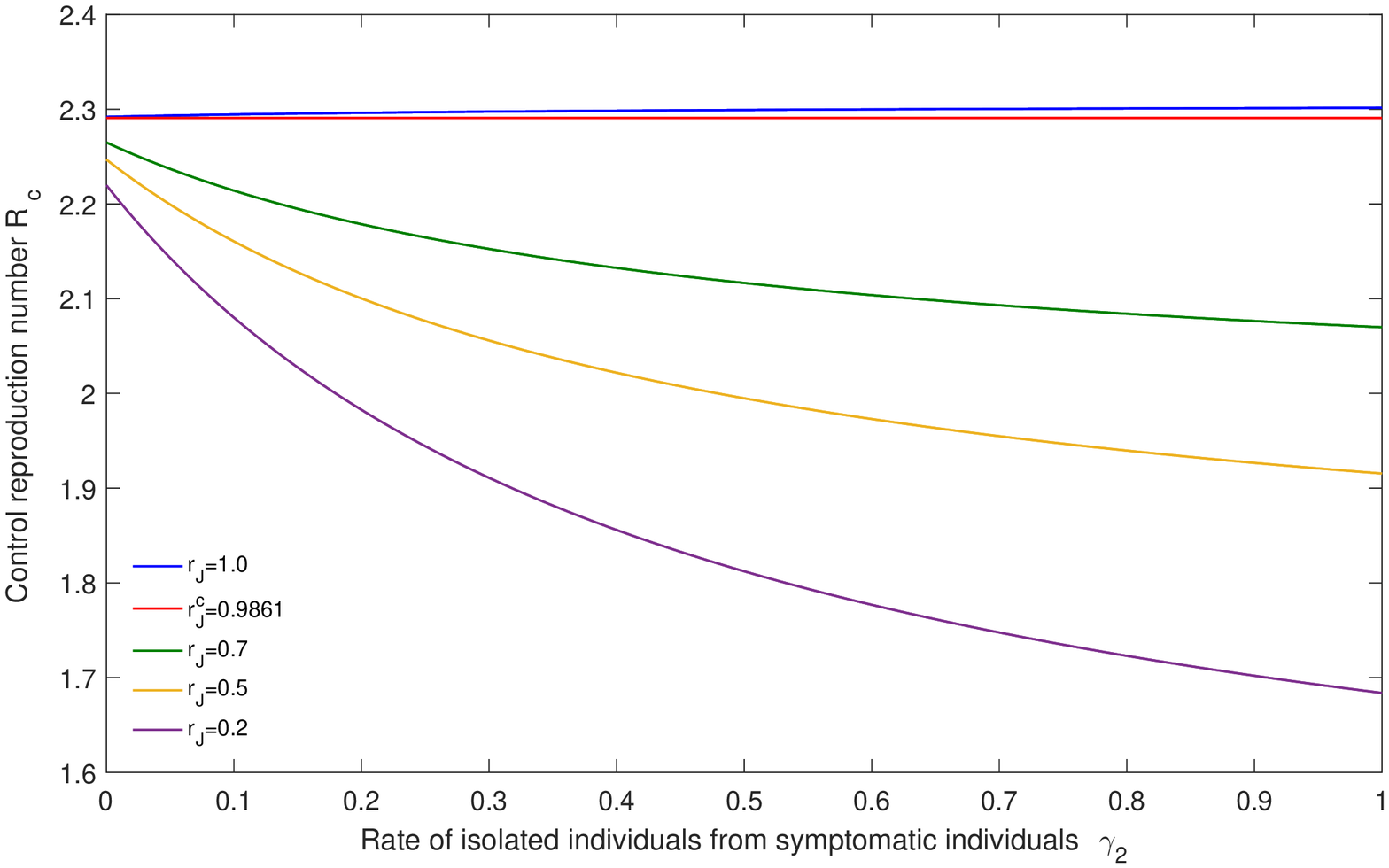}(b)
    \caption{Effect of isolation parameters $\gamma_1$ and $\gamma_2$ on control reproduction number $R_c$.}
    \label{fig:threshold_gamma1_gamma2}
    \end{center}
\end{figure}

From figure \ref{fig:threshold_gamma1_gamma2}(a) it is clear that quarantine parameter $\gamma_1$ has positive population-level impact ($R_c$ decreases with increase in $\gamma_1$) for $r_Q<0.9548$ and have negative population level impact for $r_Q>0.9548$. Similarly from the figure \ref{fig:threshold_gamma1_gamma2}(b), it is clear that, isolation has positive level impact for $r_J<0.9861$, whereas isolation has negative impact if $r_J>0.9861$. This result indicate that isolation and quarantine programs should run effective so that the modification parameters remain below the above mentioned threshold.

\section{Discussion}
During the period of an epidemic when human-to-human transmission is established and reported cases of COVID-19 are rising worldwide, forecasting is of utmost importance for health care planning and control the virus with limited resource. In this study, we have formulated and analyzed a compartmental epidemic model of COVID-19 to predict and control the outbreak. The basic reproduction number and control reproduction number are calculated for the proposed model. It is also shown that whenever $R_0<1$, the DFE of the model without control is globally asymptotically stable. The efficacy of quarantine of exposed individuals and isolation of infected symptomatic individuals depends on the size of the modification parameter to reduce the infectiousness of exposed ($r_Q$) and isolated ($r_J$) individuals. The usage of quarantine and isolation will have positive population-level impact if $r_Q < r_{\gamma_1}$ and $r_J<r_{\gamma_2}$ respectively. We calibrated the proposed model to fit daily data from the UK. Using the parameter estimates, we then found the basic and control reproduction numbers for the UK. Our findings suggest that independent self-sustaining human-to-human spread ($R_0>1$, $R_c>1$) is already present in the UK. The estimates of control reproduction number indicate that sustained control interventions are necessary to reduce the future COVID-19 cases. The health care agencies should focus on successful implementation of control mechanisms to reduce the burden of the disease. 

The calibrated model then checked for short-term predictability. It is seen that the model performs excellently (Fig. \ref{Fig:short_term_prediction}). The model predicted that the new cases in the UK will show decreasing trend in the near future. However, if the control measures are increased (or $R_c$ is decreased below unity to ensure GAS of the DFE) and maintained efficiently, the subsequent outbreaks can be controlled. 

Having an estimate of the parameters and prediction results, we performed control intervention related numerical experiments. Sensitivity analysis reveal that the transmission rate is positively correlated and quarantine and isolation rates negatively correlated with respect to control reproduction number. This indicate that increasing quarantine and isolation rates and decreasing transmission rate will decrease the control reproduction number and consequently will reduce the disease burden. 

While investigating the contour plots \ref{Fig:contour_plots1}, we found that effective management of quarantined individuals is more effective than management of isolated individuals to reduce the control reproduction number below unity. Thus if limited resources are available, then investing on the quarantined individuals will be more fruitful in terms of reduction of cases.

Finally, we studied the effect of modification factor for quarantined population, modification factor for isolated population and transmission rate on the newly infected symptomatic COVID-19 cases. Numerical results show that all the three control measures are quite effective in reduction of the COVID-19 cases in the UK (Fig. \ref{fig:contour_plots2}). The threshold analysis reinforce that the quarantine and isolation efficacy should be increased to reduce the epidemic (Fig. \ref{fig:threshold_gamma1_gamma2}). Thus, quarantine and isolation efficacy should be increased by means of proper hygiene and personal protection by health care stuffs. Additionally, the transmission coefficient can be reduced by avoiding contacts with suspected COVID-19 infected cases. 

In summary, our study suggests that COVID-19 has a potential to be endemic for quite a long period but it is controllable by social distancing measures and efficiency in quarantine and isolation. Moreover, if limited resources are available, then investing on the quarantined individuals will be more fruitful in terms of reduction of cases. The ongoing control interventions should be adequately funded and monitored by the health ministry. Health care officials should supply medications, protective masks and necessary human resources in the affected areas. 

\section*{Acknowledgements}
Sk Shahid Nadim receives senior research fellowship from CSIR, Government of India, New Delhi. Research of Indrajit Ghosh is financially supported by the Indian Statistical Institute, Kolkata through his visiting scientist position at this institute.

\bibliographystyle{plain}
\biboptions{square}
\bibliography{bib}

\end{document}